\title{Model Checking Markov Chains as\\ Distribution Transformers}
\author{Rajab Aghamov\inst{1} \and Christel Baier\inst{1} \and Toghrul Karimov\inst{2} \and Joris Nieuwveld \inst{2} \and \\ Joël Ouaknine\inst{2} \and Jakob Piribauer\inst{1} \and Mihir Vahanwala\inst{2}}
\institute{Technische Universität Dresden \and Max Planck Institute for Software Systems, Saarland Informatics Campus}
\newcommand{\rel}{\mathbb{R}}
\newcommand{\rat}{\mathbb{Q}}
\newcommand{\seq}[1]{( #1 )_{n=0}^{\infty}}
\newcommand{\st}{\colon}
\begin{document}
\maketitle

\begin{abstract}
	The conventional perspective on Markov chains considers decision problems concerning the probabilities of temporal properties being satisfied by traces of visited states. However, consider the following query made of a stochastic system modelling the weather: given the conditions today, will there be a day with less than 50\% chance of rain? The conventional perspective is ill-equipped to decide such problems regarding the evolution of the initial distribution. The alternate perspective we consider views Markov chains as \emph{distribution transformers}: the focus is on the sequence of distributions on states at each step, where the evolution is driven by the underlying stochastic transition matrix. More precisely, given an initial distribution vector $\mu$, a stochastic update transition matrix $M$, we ask whether the ensuing sequence of distributions $(\mu, M\mu, M^2\mu, \dots)$ satisfies a given temporal property. This is a special case of the model-checking problem for linear dynamical systems, which is not known to be decidable in full generality.
The goal of this article is to delineate the classes of instances for which this problem can be solved, under the assumption that the dynamics is governed by stochastic matrices.
\end{abstract}

\section{Introduction}
Markov chains are most often regarded by the verification community as a probabilistic means to model the uncertainty inherent to real-world transition systems. We refer the reader to \cite[Chapter 10]{Baier-Katoen} for a thorough exposition and a comprehensive set of references on this perspective. A central application is the verification of network protocols, where it must be certified that packets are transmitted with high probability within a certain number of rounds, or that it is exceedingly unlikely that a queue of requests grows longer than a certain threshold. Here, atomic propositions are framed in terms of states of the system, and events are linear- or branching-time properties. The probability of an event is the measure of the set of runs in which the event occurs, and can be obtained by solving linear programs. The system is deemed correct if events of interest occur with specified probabilities.

The inherent uncertainty plays a more focal role in the study of systems of a more dynamical nature. For instance, consider the weather: it is quite natural to ask whether initialising a forecasting system with the prevalent conditions implies that there will be a day with less than 50\% chance of rain, or whether every week will have a day with less than 25\% chance of rain. The conventional approach does not resolve such queries: here, atomic propositions refer to \emph{distributions} on states and properties specify their evolution in time. For Markov chains, this evolution is governed by the underlying stochastic matrix $M$: if the initial distribution is $\mu$, then the ensuing sequence of distributions is $\mu, M\mu, M^2 \mu, \dots$ ad infinitum.

Research on verifying the evolution of distributions against temporal specifications is not as prevalent as that which adopts the conventional perspective. Decidability in the alternative setting has seemed rather inaccessible: \cite{Agrawal2015,logicprob} only present incomplete or approximate verification procedures, while \cite{KVAK10,KwonGul} owe their model-checking procedures to additional mathematical assumptions. This is not surprising, as the endeavour is fundamentally about solving special instances of the model-checking problem for linear dynamical systems, which is often associated with hard number-theoretic questions.

Formally, a linear dynamical system (LDS) of dimension $k$ is given by an initial vector $v \in \mathbb{Q}^k$ and an update matrix $A \in \mathbb{Q}^{k \times k}$. Its trajectory is the infinite sequence of vectors $(v, Av, A^2v, \dots)$. Let $\mathcal{T} = \{T_1, \dots, T_\ell\}$ be a collection of semialgebraic subsets of $\mathbb{R}^k$, i.e.\ each $T_i$ is defined by a Boolean combination of inequalities involving polynomials with integer coefficients. The characteristic word $\alpha$ of the dynamical system with respect to $\mathcal{T}$ is the infinite word over the alphabet $2^{\mathcal{T}}$ such that for all $n$, $T \in \alpha(n)$ if and only if $A^n v \in T$. The model-checking problem for LDS takes as input $v, A, \mathcal{T}$, and an $\omega$-regular language $\mathcal{L}$ over the alphabet $2^{\mathcal{T}}$ and asks whether the characteristic word $\alpha \in \mathcal{L}$.

Model checking Markov chains as distribution transformers is a special case of the above where $v = \mu$ is a distribution (all entries are non-negative and sum up to $1$), and the update matrix $A = M$ is the (column-)stochastic transition matrix underlying the Markov chain, i.e.\ the $(i, j)$-th entry denotes the transition probability to move from state $j$ to state $i$, and hence each column is a distribution.

The model-checking problem for linear dynamical systems is easily seen to subsume the Skolem problem\footnote{The Skolem problem is often formulated equivalently in terms of linear recurrence sequences (LRS) that satisfy a recurrence relation $u_{n+k} = a_{k-1}u_{n+k-1}+\dots + a_0 u_n$.}, which is a long-standing number-theoretic problem: given an initial vector $v$, update matrix $A$, and normal vector $h$, the Skolem problem asks whether there is an $n$ such that $h^\top A^n v = 0$. The Skolem problem has only been solved for dimension $k \le 4$ \cite{skolem,skolemv}, and is open for $k \ge 5$.
We refer to the Skolem problem in dimension $k$ as {Skolem-$k$}.

Recently, Bilu \emph{et al.\ }\cite{skolembilu} gave a decision procedure for the Skolem problem for diagonalisable $A$; however the guarantee of termination of this procedure is subject to two classical number-theoretic conjectures. A tool that implements the algorithm is available at \cite{skolemtool}.

It is not uncommon for instances of the model-checking problem for linear dynamical systems to be shown decidable by restricting the matrix $A$ to be diagonalisable \cite{prefixindependent,powerpositivity,ouaknine2014positivity} or by bounding its dimension $k$ \cite{ouaknine2014positivity,joeljames3}. The natural question for us is whether restricting the update matrix to be stochastic yields any significant spectral or dimensional benefits that make the model-checking problem tractable. The reductions in \cite{vahanwalaergodic} from the Skolem and closely related problems to the
model-checking problem for Markov chains indicate that the answer is not entirely in the affirmative. Nevertheless, there is some nuance, which we comprehensively explore and detail in this article.

\subsection*{Our contributions}
We identify that the dynamics of Markov chains relevant to our model-checking problem play out in a space of dimension lower than that of the ambient space. The following definition explicitly captures the observation that if $\mu$ is a distribution and $M$ is stochastic, then each element of the sequence $(\mu, M\mu, M^2 \mu, \dots)$ is also a distribution on the probability simplex $\Delta = \{x: x \ge 0 \land x_1 + \dots + x_k = 1\}$, which is contained in an affine subspace.
\begin{definition}
	The dynamical dimension of a stochastic matrix $M$ is defined as the number of nonzero eigenvalues (counted with algebraic multiplicity) of $M$ that have modulus strictly less than $1$.
\end{definition}

Technically speaking, it is the maximum dimension of the semialgebraic sets involved (see Sec.~\ref{semialgebraic prelims}) that plays a crucial role in characterising the solvability of the model-checking problem for linear dynamical systems. Semialgebraic sets come equipped with a notion of \emph{intrinsic dimension} (intuitively, in ambient $3$-dimensional space, a point has intrinsic dimension $0$, a wire has intrinsic dimension $1$, a surface has intrinsic dimension $2$, and a solid has intrinsic dimension $3$). There is also a notion of \emph{linear dimension}: the dimension of the smallest linear subspace that contains the set in question.

To solve the model-checking problem for Markov chains of dynamical dimension $k$, we project the problem onto an instance of a linear dynamical system of ambient dimension $k$. It is clear that the attendant semialgebraic sets become at most $k$-dimensional in doing so. We show additionally that the linear dimension necessarily decreases as well:

\begin{theorem}[Auxiliary decidability result]
	\label{aux}
	The model-checking problem for Markov chains with instances $(\mu, M, \mathcal{T}, \mathcal{L})$ such that
	\begin{itemize}
		\item[(a)]  $M$ has dynamical dimension $k$, and
		\item[(b)]  the target sets in $\mathcal{T}$ have intrinsic dimension at most $d_1$ and linear dimension at most $d_2$
	\end{itemize}
	reduces to the model-checking problem for linear dynamical systems with instances $(v, A, \mathcal{T}', \mathcal{L}')$ such that
	\begin{itemize}
		\item[(A)] $A$ is an invertible $k \times k$ matrix, and
		\item[(B)] the target sets in $\mathcal{T}'$ have intrinsic dimension at most $d_1$ and linear dimension at most $d_2-1$.
	\end{itemize}
\end{theorem}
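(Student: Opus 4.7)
The plan is to reduce via a spectral decomposition of $M$ that isolates a $k$-dimensional transient subspace. Decompose $\mathbb{R}^n = V_0 \oplus V_p \oplus V_{<1}$, where $V_0$, $V_p$, and $V_{<1}$ are the generalized eigenspaces of $M$ corresponding to the eigenvalue $0$, eigenvalues of modulus exactly $1$, and nonzero eigenvalues of modulus strictly less than $1$ respectively, so that $\dim V_{<1} = k$. Each summand is $M$-invariant, and the restriction of $M$ to $V_{<1}$ is invertible. Since $M|_{V_0}$ is nilpotent, the $V_0$-component of $M^n\mu$ vanishes for all $n \ge n_0$ for some explicit $n_0$; the first $n_0$ symbols of the characteristic word are then fully determined, and the $\omega$-regular specification can be replaced by its (still $\omega$-regular) left quotient by this deterministic prefix.

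Next I would handle the peripheral subspace $V_p$. A classical Perron--Frobenius argument for stochastic matrices shows that the peripheral spectrum is semisimple and consists of roots of unity. Letting $p$ be the least common multiple of their orders, $M^p$ restricted to $V_p$ is the identity. Partitioning the tail of the trajectory by residue class modulo $p$ yields, for each $r \in \{0, \dots, p-1\}$ and all sufficiently large $n$, a decomposition $M^{np+r}\mu = c_r + A^n v_r$, where $c_r \in V_p$ is the constant $V_p$-component of $M^r\mu$, the matrix $A := M^p|_{V_{<1}}$ is invertible and $k \times k$ (its eigenvalues are $p$-th powers of the nonzero sub-unit eigenvalues of $M$), and $v_r$ is the $V_{<1}$-component of $M^r\mu$. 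The specification $\mathcal{L}$ decomposes accordingly into $p$ specifications $\mathcal{L}'_r$, yielding $p$ LDS instances with state space $V_{<1} \cong \mathbb{R}^k$, update matrix $A$, and initial vector $v_r$; the condition $M^{np+r}\mu \in T_i$ becomes $A^n v_r \in T'_{i,r} := \{y \in V_{<1} : c_r + y \in T_i\}$.

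For the dimension bounds, intersection with a linear subspace and translation both preserve intrinsic dimension, giving each $T'_{i,r}$ intrinsic dimension at most $d_1$. The linear-dimension bound exploits the hyperplane invariant of Markov chains. Because $M$ is column-stochastic, $\mathbf{1}^\top M = \mathbf{1}^\top$, so $\mathbf{1}$ lies in the left generalized eigenspace for eigenvalue $1$; the biorthogonality between left and right generalized eigenspaces then yields $\mathbf{1}^\top v = 0$ for every $v \in V_{<1}$. Combined with $\mathbf{1}^\top M^{np+r}\mu = 1$, this forces $\mathbf{1}^\top c_r = 1$, so the affine subspace $c_r + V_{<1}$ lies entirely inside the hyperplane $H := \{x : \mathbf{1}^\top x = 1\}$. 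If $L$ is the linear hull of $T_i$, of dimension $d_2$, then either $L \subseteq \ker \mathbf{1}^\top$ (whereupon $T_i \cap H = \emptyset$ and $T'_{i,r} = \emptyset$), or $L \cap H$ is an affine subspace of dimension at most $d_2 - 1$; in the latter case, $T'_{i,r}$ is in affine bijection with $T_i \cap (c_r + V_{<1}) \subseteq L \cap H$, and therefore fits inside an affine subspace of $V_{<1} \cong \mathbb{R}^k$ of dimension at most $d_2 - 1$, giving the required bound.

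The main obstacle is performing this last piece of dimension bookkeeping cleanly, particularly verifying that the affine-dimension reduction achieved by restricting to the hyperplane $H$ indeed transfers to the stated bound on the target sets once we pass to the projected space $V_{<1}$, and in handling the few degenerate cases where the linear hull of $T_i$ sits inside $\ker \mathbf{1}^\top$. All remaining ingredients---the spectral decomposition, the periodification argument via Perron--Frobenius semisimplicity, and the finite-prefix absorption into $\mathcal{L}$---are classical or routine once the decomposition is in hand.
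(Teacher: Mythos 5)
Your overall strategy mirrors the paper's (strip the nilpotent part, tame the peripheral roots of unity by passing to a power, subtract the peripheral/stationary component, project onto the contracting subspace), but as written it has two genuine gaps. First, the treatment of the specification across residue classes does not work: you assert that ``$\mathcal{L}$ decomposes accordingly into $p$ specifications $\mathcal{L}'_r$'', yielding $p$ separate LDS instances $(v_r, A, \cdot, \mathcal{L}'_r)$. Whether the original characteristic word lies in $\mathcal{L}$ depends on the \emph{correlations} between the residue-class subwords (consider $\mathcal{L}$ = ``positions $pn$ and $pn{+}1$ always agree on membership in $T$''), and this cannot in general be recovered from separate membership queries $\alpha_r \in \mathcal{L}'_r$ for fixed $\omega$-regular languages; moreover the theorem asks for a reduction to instances with a single $k\times k$ invertible matrix, not a family of queries. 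The repair is to keep a \emph{single} instance with update matrix $A^{\,p}$ (or $B^c$ in the paper's notation) and initial vector $v_0$, and instead enlarge the target family with the preimages $T_{i,r}=\{x: B^r x\in T_i\}$ — legitimate here because $A^n v_r=(M|_{V_{<1}})^{np+r}v_0$ — so that one characteristic word records all residues simultaneously; the paper then shows the ``flattened'' language is $\omega$-regular by an explicit Muller-automaton construction. You also gloss over effectivity/rationality of the decomposition (the paper stresses this); it is salvageable because the sets of eigenvalues $\{0\}$, roots of unity, and nonzero eigenvalues of modulus $<1$ are each Galois-stable, so rational bases of $V_0$, $V_p$, $V_{<1}$ are computable, but it needs saying.

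Second, the dimension bookkeeping you flag as the ``main obstacle'' is indeed where your argument falls short of the statement. You conclude that $T'_{i,r}$ lies in an \emph{affine} subspace of dimension $d_2-1$, but requirement (B), and its downstream use via Theorem~\ref{tamedecidability}, concerns the \emph{linear} dimension (smallest linear subspace containing the set): an affine flat of dimension $d_2-1$ not through the origin has linear hull of dimension $d_2$, so no reduction is gained. Your argument yields a linear subspace only when the limit point $c_r$ lies in the linear hull $U_i$ of $T_i$ (then $(U_i - c_r)\cap V_{<1} = U_i\cap V_{<1}$ is linear and proper, since $\mathbf{1}^\top c_r = 1 \neq 0$), and you have no mechanism forcing this. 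The paper supplies exactly the missing mechanism: since $A^n v \to 0$ exponentially fast, one computes $\varepsilon$ and $n_0$ and truncates the targets to $\{x : s+Qx\in T_i \wedge \|x\|<\varepsilon\}$, with $\varepsilon$ chosen so that non-emptiness of the truncated target forces $s\in U_i$; the membership constraints then become homogeneous in $x$ (namely $HQx=\mathbf{0}$ together with $P'Qx=\mathbf{0}$, where some row of $P'$ does not vanish on $s$ and is hence independent of the rows of $H$), which places $T'_i$ inside a genuine \emph{linear} subspace of dimension at most $d_2-1$. Without this $\varepsilon$-truncation step (applied per residue class in your setting, since each $r$ has its own limit point $c_r$), the claimed bound (B) is not established.
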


We remark that proving the above requires us to preserve not only the spectral properties of the update matrix but also the rationality of the matrix and the initial vector.

The state of the art motivates us to define a criterion for semialgebraic sets to be considered low-dimensional with respect to our model-checking problem for Markov chains.

\begin{theorem}[Main result of \cite{tamedecidable}]
\label{tamedecidability}
The model-checking problem $(v, A, \mathcal{T}, \mathcal{L})$ for linear dynamical systems with $\mathcal{T}$ being a collection of semialgebraic sets whose intrinsic dimension is at most $1$ or linear dimension is at most $3$ is decidable.
\end{theorem}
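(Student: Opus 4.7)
The plan is to reduce the model-checking problem under these dimension hypotheses to a pair of decidable sub-problems: an ``asymptotic'' question about the orbit closure of $(A^n v)_{n \ge 0}$ on a torus, and a ``sporadic'' question about finitely many anomalous indices. First I would decompose $A$ via the Jordan normal form and write each coordinate of $A^n v$ as a sum $\sum_i p_i(n) \lambda_i^n$, where the $\lambda_i$ are the eigenvalues of $A$ and the $p_i$ are polynomials encoding the Jordan-block structure. After separating eigenvalues by modulus and normalising by the appropriate dominant term, the leading-order behaviour of the trajectory is governed by the vector of angles $(n \theta_1, \dots, n \theta_r) \bmod 2\pi$, where $\theta_i = \arg \lambda_i$. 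By Kronecker's theorem the closure of this orbit in $\mathbb{T}^r$ is an effectively computable rational subtorus $\mathbb{G}$, and Weyl's equidistribution theorem ensures that the orbit is distributed according to Haar measure on $\mathbb{G}$.

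Under the linear-dimension-at-most-$3$ hypothesis, every target is contained in an affine subspace of dimension at most $3$, so after projecting the relevant component of the dynamics I would reduce to a model-checking instance for a $3$-dimensional LDS, putting us in a regime where related orbit problems are classically known to be tractable. Here the ``generic cell'' that $A^n v$ inhabits is determined by the residue of $n$ modulo a computable period on $\mathbb{G}$, and deciding whether a given semialgebraic set meets the orbit closure reduces to a first-order question over the reals, resolved by Tarski-Seidenberg. Under the intrinsic-dimension-at-most-$1$ hypothesis, every target becomes, after a semialgebraic cell decomposition, a finite union of real-algebraic points and arcs: hitting a point reduces to a Skolem-type equation, while hitting a $1$-dimensional arc reduces to checking whether a system of algebraic equations admits common solutions along a local parameterisation, both decidable under low-dimensional orbit hypotheses.

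The step I expect to be the main obstacle is controlling the ``sporadic'' set of indices $n$ at which the trajectory deviates from its generic orbit-closure behaviour --- for example, hitting a lower-dimensional stratum of a target, or failing to hit a target it ``should'' by virtue of lying extremely close to that target's boundary. I would attack this by invoking Baker's theorem on linear forms in logarithms of algebraic numbers, together with its $p$-adic refinements, to obtain effective lower bounds of the form $|p(n) \lambda^n - c| \ge n^{-C}$ for algebraic constants $c$ and a computable $C$. This yields an effective threshold $N_0$ beyond which no sporadic event can occur. Once the sporadic indices are isolated as a finite computable set, the characteristic word of $(v, A)$ with respect to $\mathcal{T}$ is effectively ultimately periodic modulo a finite prefix perturbation, and membership in $\mathcal{L}$ is then decided by running this word through the B\"uchi automaton recognising $\mathcal{L}$.
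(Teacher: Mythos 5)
Note first that the paper you are working from does not prove Theorem~\ref{tamedecidability} at all: it is imported wholesale as the main result of \cite{tamedecidable}, so the only meaningful comparison is with that work's argument. Your opening moves (Jordan decomposition, writing coordinates as $\sum_i p_i(n)\lambda_i^n$, splitting eigenvalues by modulus, and using Kronecker/Weyl to compute the orbit closure of the argument vector in a torus) do match the standard machinery used there. But your sketch has a genuine gap at its final and crucial step: you claim that, after removing finitely many ``sporadic'' indices bounded by a Baker-type threshold $N_0$, the characteristic word is effectively ultimately periodic. This is false even in the simplest relevant instances. Take $A$ a planar rotation by an irrational angle and $T$ a circular arc (intrinsic dimension $1$): the characteristic word is a Sturmian-type word, which is not ultimately periodic, and no finite prefix perturbation fixes that. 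The cited proof does not establish ultimate periodicity; it establishes that the characteristic word is \emph{effectively almost periodic} (every finite factor either occurs infinitely often with an effectively computable gap bound, or never occurs beyond an effectively computable index), and then invokes the decidability of MSO/$\omega$-regular properties of effectively almost periodic words (Semënov's theorem) rather than simply running an eventually periodic word through an automaton. Without that replacement your final step does not go through.

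The second, related problem is that your argument never actually uses the hypotheses of the theorem: nothing in your sketch breaks if the targets have intrinsic dimension $2$ or linear dimension $4$, yet in that generality the problem subsumes Skolem-$5$ (as the surrounding paper itself points out, a linear-dimension-$4$ target already encodes it), so any proof that is insensitive to the dimension bound cannot be correct. The place where the bound is needed is exactly the sentence you flag as the ``main obstacle'': the assertion that Baker's theorem yields an effective $N_0$ beyond which the orbit cannot come anomalously close to the boundary of a target. Effective lower bounds on linear forms in logarithms suffice only when the Diophantine data produced by the target is sufficiently constrained --- which is what intrinsic dimension at most $1$ or containment in a $3$-dimensional subspace guarantees after the orbit-closure analysis; for higher-dimensional targets the required estimates are precisely the open Diophantine-approximation questions that keep the general model-checking problem (and Positivity/Skolem) unresolved. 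So the sketch needs (i) the almost-periodicity argument in place of ultimate periodicity, and (ii) an explicit account of how the dimension hypotheses reduce the boundary-proximity questions to instances where Baker-type bounds (or, in the one-dimensional case, root-separation/cell-decomposition arguments) are actually available.
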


Note that the above result is indeed at the cutting edge of decidability: the open Skolem problem in dimension $5$, for example, queries the reachability of a set that has linear dimension $4$.

\begin{definition}
	A semialgebraic set $T \subseteq \mathbb{R}^k$ is said to be Markov-low-dimensional if it has intrinsic dimension at most $1$ or is contained in a linear subspace of dimension at most $4$.
\end{definition}

Combining Theorems~\ref{aux} and~\ref{tamedecidability} gives us our main decidability result.

\begin{theorem}[Main decidability result]
	\label{decidability}
	The model-checking problem for Markov chains restricted to instances $(\mu, M, \mathcal{T}, \mathcal{L})$ such that either
	\begin{itemize}
		\item[(a)] $\mathcal{T}$ is a collection of Markov-low-dimensional sets, or
		\item[(b)]the dynamical dimension of $M$ is at most~$3$
	\end{itemize}
	is decidable.
\end{theorem}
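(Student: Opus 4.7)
The plan is to derive this as a direct corollary of the two preceding results. Theorem~\ref{aux} reduces a Markov-chain instance to an LDS instance of ambient dimension at most the dynamical dimension of $M$, preserving the intrinsic dimension of each target set and lowering its linear dimension by one. Theorem~\ref{tamedecidability} then disposes of any LDS instance whose targets satisfy the disjunction ``intrinsic dimension at most $1$ or linear dimension at most $3$''. The proof therefore amounts to a case analysis verifying that either hypothesis of the main theorem lands us in this decidable fragment after applying the reduction.

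For case~(a), every $T \in \Tcal$ has intrinsic dimension at most $1$ or is contained in a linear subspace of dimension at most $4$. Under Theorem~\ref{aux}, the image $T' \in \Tcal'$ inherits the intrinsic-dimension bound in the first subcase and has linear dimension at most $3$ in the second, so each $T'$ satisfies the hypothesis of Theorem~\ref{tamedecidability}. For case~(b), the dynamical dimension of $M$ is at most $3$, so Theorem~\ref{aux} yields an LDS of ambient dimension at most $3$; every semialgebraic subset of that ambient space then automatically has linear dimension at most $3$, and Theorem~\ref{tamedecidability} applies without any further hypothesis on $\Tcal'$.

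The main subtlety I anticipate is the bookkeeping in case~(a): Theorem~\ref{aux} is phrased with uniform parameters $d_1, d_2$ for the whole collection, whereas Markov-low-dimensionality is a per-set disjunction. I would handle this by partitioning $\Tcal$ into $\Tcal^{(1)}$ (intrinsic dimension at most $1$) and $\Tcal^{(2)}$ (linear dimension at most $4$), invoking the reduction on each sub-family with the appropriate parameters, and merging the outputs by sharing the underlying LDS $(v, A)$, which depends only on $(\mu, M)$, and conjoining the $\omega$-regular specifications via a product automaton. Provided the reduction of Theorem~\ref{aux} is set-wise in nature, as it should be by construction, this merging is routine and completes the proof.
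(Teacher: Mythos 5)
Your proof is correct and follows essentially the same route as the paper, which obtains Theorem~\ref{decidability} directly by composing the reduction of Theorem~\ref{aux} with Theorem~\ref{tamedecidability} via exactly this case analysis. The bookkeeping worry in case (a) is moot: the reduction behind Theorem~\ref{aux} (Lemmata~\ref{zeroelim}, \ref{rootelim}, \ref{mctolds}) transforms each target set individually, so per-set dimension bounds carry over and no partitioning or product-automaton merging is needed.
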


For instances in which both hypotheses (a) and (b) of Theorem~\ref{decidability} fail to hold, we establish hardness through a slightly generalised version of the result of \cite{vahanwalaergodic}. Recall that semialgebraic sets are defined in terms of polynomial inequalities. We call a set homogeneous if all the polynomials involved are homogeneous, and $s$-homogeneous if all the polynomials involved would be homogeneous were the origin shifted to $s$. We refer the reader to Sec.\ \ref{semialgebraic prelims} for detailed definitions of these terms.

\begin{theorem}[First hardness result]
	\label{ldstomc}
	Let $s \in \mathbb{Q}^{k+1}$ be any distribution with strictly positive entries.
	The model-checking problem for $k$-dimensional linear dynamical systems with homogeneous target sets reduces to the model-checking problem for order $k+1$ ergodic Markov chains with $s$-homogeneous target sets.
\end{theorem}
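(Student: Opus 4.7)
The idea is to embed the $k$-dimensional LDS dynamics into the tangent hyperplane $H_0 = \{x \in \mathbb{R}^{k+1} : \mathbf{1}^\top x = 0\}$ of the probability simplex at $s$, via a rank-one perturbation of the trivial chain $s\mathbf{1}^\top$ that always jumps to $s$. Fix a rational isomorphism $\phi: \mathbb{R}^k \to H_0$, for instance $\phi(w) = (w_1, \dots, w_k, -\sum_{i=1}^k w_i)^\top$. Since $s \notin H_0$, the vectors $\{s, \phi(e_1), \dots, \phi(e_k)\}$ form a basis of $\mathbb{R}^{k+1}$, so one may define the rational matrix $E_1$ by $E_1 s := 0$ and $E_1 \phi(w) := \phi(A w)$. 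Set $M := s \mathbf{1}^\top + \rho E_1$ and $\mu := s + \rho \phi(v)$ for a rational $\rho > 0$ to be chosen shortly. Since $\mathbf{1}^\top\phi(w) = 0$ and $E_1 s = 0$, one has $\mathbf{1}^\top(\rho E_1) = 0$, hence $\mathbf{1}^\top M = \mathbf{1}^\top$, so $M$ is column-stochastic; and $Ms = s$, while $M\phi(w) = \rho \phi(A w)$, whence $\operatorname{Spec}(M) = \{1\} \cup \rho \cdot \operatorname{Spec}(A)$.

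\textbf{Ergodicity, target sets, and correctness.} The entries of $s\mathbf{1}^\top$ are uniformly bounded below by $\min_i s_i > 0$, and those of $\rho E_1$ scale linearly in $\rho$, so picking $\rho > 0$ rational and small enough makes $M$ strictly positive, hence irreducible, aperiodic, and ergodic; the same bound also renders $\mu$ a bona fide probability distribution (the identity $\mathbf{1}^\top\mu = 1$ being automatic). For each $T \in \mathcal{T}$, let $\pi$ be the rational projection $\mathbb{R}^{k+1} \to H_0$ along $\mathbf{1}$ and define $T' := \{y \in \mathbb{R}^{k+1} : \phi^{-1}(\pi(y - s)) \in T\}$. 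Then $T'$ is a rational semialgebraic set, and from $\pi(\lambda(y-s)) = \lambda \pi(y-s)$ combined with homogeneity of $T$ one checks that $T'$ is $s$-homogeneous. By induction $M^n \mu - s = \rho^{n+1}\phi(A^n v) \in H_0$, so $\phi^{-1}(\pi(M^n\mu - s)) = \rho^{n+1} A^n v$, which lies in $T$ iff $A^n v$ does, by homogeneity and $\rho > 0$. The characteristic words over $\mathcal{T}$ and $\mathcal{T}'$ therefore coincide, and the $\omega$-regular specification $\mathcal{L}$ transfers verbatim.

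\textbf{Main obstacle.} The delicate aspect is juggling four constraints simultaneously: rationality of $M$, $\mu$ and the descriptions of $T'$; column-stochasticity together with $\mu$ being a probability distribution; \emph{genuine} ergodicity rather than $s$ being merely \emph{a} stationary distribution of $M$; and $s$-homogeneity of $T'$ with a semialgebraic description having integer coefficients. The rank-one ansatz $M = s\mathbf{1}^\top + \rho E_1$ decouples these cleanly --- column-stochasticity and stationarity are algebraic identities independent of $\rho$, ergodicity is bought by making $\rho$ small enough to guarantee strict positivity of all entries, and homogeneity of the original targets $T$ is precisely what allows the $\rho^{n+1}$ growth factor in the trajectory to be neutralised in the definition of $T'$.
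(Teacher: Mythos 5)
Your construction is essentially the paper's own: your $\phi$ is the paper's matrix $Q=\begin{bmatrix} I \\ -\mathbf{1}_{k}^\top\end{bmatrix}$, your $E_1$ is $QAQ'$, and $M = s\mathbf{1}^\top + \rho E_1$ with small rational $\rho$ ensuring strict positivity (hence ergodicity), with homogeneity of the targets absorbing the geometric scaling factor, is exactly the argument of Corollary~\ref{ldstomcbase} and the ensuing proof of Theorem~\ref{ldstomc}. The only cosmetic deviations --- a single scale parameter $\rho$ in place of the paper's $\eta$ and $\rho$, and defining $T'$ via the projection along $\mathbf{1}$ rather than with the explicit conjunct $\mathbf{1}_{k+1}^\top(y-s)=0$ --- do not affect correctness.
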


\begin{theorem}[Second Hardness Result]
	\label{skolemhard}
	 Skolem-5 Turing-reduces to the reachability problem for ergodic Markov chains and semialgebraic targets of intrinsic dimension at most~2.
\end{theorem}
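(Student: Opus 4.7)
The plan is to strengthen the hardness reduction of \cite{vahanwalaergodic} underlying Theorem~\ref{ldstomc} by specializing to Skolem-5 and further carving the target down to intrinsic dimension~$2$, using the spectral structure of the hard case. I would first restrict attention to hard Skolem-5 instances: LRS whose characteristic polynomial has two pairs of complex conjugate dominant roots $\lambda_1, \overline{\lambda_1}, \lambda_2, \overline{\lambda_2}$ of common modulus $r$ together with one strictly smaller real root $\rho$. Other configurations of dominant roots yield decidable subcases and are dispatched by separate oracle queries within the Turing reduction. For the hard case, multiply $u_n$ by $\sigma^n$ for a rational $0 < \sigma < 1/r$ so that the scaled eigenvalues lie strictly inside the unit disk; this preserves zeros of $u_n$.

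Next, apply the construction of Theorem~\ref{ldstomc} to obtain an ergodic Markov chain $M$ of order $6$ with spectrum $\{1, \sigma\lambda_1, \overline{\sigma\lambda_1}, \sigma\lambda_2, \overline{\sigma\lambda_2}, \sigma\rho\}$, rational initial distribution $\mu$, and a rational linear form $g$ with $g^\top(M^n\mu - s) = \sigma^n u_n$, where $s$ is the stationary distribution. Then exploit the algebraic identities that the orbit satisfies automatically: in the real Jordan basis, the orbit coordinates $(c_1, x_n, y_n, u_n', v_n', w_n)$ obey (i)~$c_1 \equiv 1$, from the simplex constraint with the stationary vector as the $1$-eigenvector, and (ii)~$K_2(x_n^2 + y_n^2) = K_1(u_n'^2 + v_n'^2)$ for fixed rational constants $K_1, K_2$, since $|\sigma\lambda_1| = |\sigma\lambda_2|$. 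These confine the orbit to a $4$-dimensional algebraic variety $V \subset \mathbb{R}^6$; intersecting $V$ with the target hyperplane $\{g^\top(x-s) = 0\}$ would yield a generic $3$-dimensional set, so one more dimension-reducing constraint is still required.

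The route for the final unit of dimension reduction is the crux of the proof. I would either enlarge the Markov chain by introducing an auxiliary component whose coordinate is algebraically constrained by a new polynomial identity on the orbit (for instance through a product construction with a carefully chosen auxiliary ergodic chain, or by including a coordinate encoding a specific polynomial in the existing ones), or use the flexibility of the Turing reduction by decomposing Skolem-5 into finitely many subcases differing in fine-grained algebraic features of the instance, each encodable via a single reachability query with a target of intrinsic dimension at most $2$. The target $T$ is then defined as the semi-algebraic intersection of $V$, the hyperplane $\{g^\top(x - s) = 0\}$, and the newly engineered constraint, designed so that the trajectory $M^n \mu$ meets $T$ iff there exists $n$ with $u_n = 0$.

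The main obstacle, and the technical heart of the argument, lies in engineering this last constraint. The Zariski closure of an LDS orbit has dimension equal to the rank of the multiplicative group generated by the eigenvalues modulo torsion; for generic hard Skolem-5 instances this rank is~$4$, so only the two identities above are available directly from the given data. Producing an additional polynomial identity while simultaneously preserving rationality, column-stochasticity, ergodicity, and the Skolem-5 correspondence is the delicate construction that must be carried out; once it is in place, the intrinsic-dimension bound on $T$ follows from a straightforward codimension count, and correctness of the reduction from the linear observability of $\sigma^n u_n$.
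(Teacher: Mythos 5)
Your setup (restricting to the hard spectral configuration $\lambda,\overline{\lambda},\gamma,\overline{\gamma},\rho$ with $|\lambda|=|\gamma|>\rho>0$, rescaling into the unit disk, and embedding into an ergodic chain via the Theorem~\ref{ldstomc}/Corollary~\ref{cor::making-ergodic-mc-from-lds} machinery) matches the paper, and your accounting is accurate: the simplex/stationary constraint and the equal-modulus identity only get you to an orbit closure of one dimension too high once intersected with the Skolem hyperplane. But the proposal stops exactly at the crux. You acknowledge that for a generic rational scaling $\sigma$ the multiplicative group of the scaled eigenvalues has rank $4$, and you leave the ``final unit of dimension reduction'' to an unspecified auxiliary-chain product construction or an unspecified case split. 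Neither route is carried out, and the first is genuinely problematic: enlarging the chain with extra coordinates adds ambient dimensions and new eigenvalues without creating any new polynomial identity tying the \emph{existing} dominant moduli to the small real eigenvalue, which is what the dimension count requires. As written, the reduction therefore does not establish the intrinsic-dimension-$2$ bound.

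The missing idea in the paper is that the scaling constant itself is the extra degree of freedom: one first proves, by a Galois-symmetry argument using $\lambda\overline{\lambda}=\gamma\overline{\gamma}$, that $\rho$ is rational, hence $|\lambda|^4$ is rational as well; one then chooses the rational scaling $\kappa = \frac{1}{|\rho|}\cdot\frac{|\rho|^{4c}}{|\lambda|^{4c}}$ (with $c$ large enough to satisfy the smallness condition of Corollary~\ref{cor::making-ergodic-mc-from-lds}), so that the scaled eigenvalues satisfy the exact relation $(\kappa|\rho|)^{4c-1}=(\kappa|\lambda|)^{4c}$. This engineered multiplicative dependence is precisely the additional polynomial identity you were looking for: in real Jordan coordinates it yields the equation $\bigl((x_1^2+x_2^2)v_5^2\bigr)^{4c} = x_5^{4c-1}\, v_5\, d_1^{8c}$ alongside the equal-modulus equation, confining the orbit to a semialgebraic set whose irreducible pieces have dimension at most $3$; intersecting with the (irreducible, non-containing) image of the hyperplane $\{h^\top x = 0\}$ then gives dimension at most $2$, and the ergodic embedding transports this target. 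Note also that your final ``straightforward codimension count'' is not quite automatic: the paper must decompose the orbit-closure set into parametrised irreducible components and check mutual non-containment with the hyperplane to conclude the dimension drop, rather than appeal to genericity. Without the rationality-of-$\rho$ argument and the tailored choice of $\kappa$, the construction you outline cannot be completed along the lines you propose.
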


We give the prerequisite mathematical background in Sec.\ \ref{background}, prove our decidability results in Sec.\ \ref{decidability proofs}, our hardness results in Sec.\ \ref{hardness proofs}, and finally offer concluding remarks in Sec.\ \ref{conclusion}.

\section{Mathematical background}
\label{background}
We will use $\mathbf{0}_k$ to denote the zero vector of dimension~$k$, and $\mathbf{1}_k$ to denote the $k$-dimensional vector whose entries are all $1$.
If $v_i \in \rel^{d_i}$ for $1\le i \le k$, then by $(v_1,\ldots,v_k)$ we mean the concatenation of $v_1,\ldots,v_k$ belonging to $\rel^{d_1+\ldots+d_k}$.

\subsection{Automata over infinite words}
Let $\Sigma$ be a finite non-empty alphabet. We use $\Sigma^*, \Sigma^\omega$ to respectively denote the sets of finite and infinite words over $\Sigma$. An $\omega$-language $\mathcal{L}$ over $\Sigma$ is a subset of $\Sigma^\omega$. A language $\mathcal{L}$ is $\omega$-regular if and only if it is accepted by a deterministic Muller automaton, which is defined as follows.

\begin{definition}
A deterministic Muller automaton $\mathcal{A}$ is given by $ (Q, \Sigma, q_0, \delta, \mathcal{F})$ where $Q$ is a finite set of states, $\Sigma$ is a finite alphabet, $q_0$ is the initial state, $\delta : Q \times \Sigma \rightarrow Q$ is the transition function, and $\mathcal{F} = \{F_1, \dots, F_a\} \subseteq 2^{Q}$. Let $\text{Inf}(\alpha)$ denote the set of states visited infinitely often by the run of $\mathcal{A}$ on $\alpha$. The automaton accepts $\alpha$ if $\text{Inf}(\alpha) \in \mathcal{F}$.
\end{definition}

\subsection{Semialgebraic sets}
\label{semialgebraic prelims}
\begin{definition}
A semialgebraic set $T \subseteq \mathbb{R}^k$ is expressed as a Boolean combination of finitely many polynomial (in)equalities $p_i(x_1, \dots, x_k) \sim 0$, where each $p_i \in \mathbb{Z}[x_1, \dots, x_k]$ (i.e.\ has integer coefficients). Then $T$ is defined as the set of all $x \in \mathbb{R}$ which satisfy the combination of (in)equalities.
\end{definition}

\begin{definition}
A semialgebraic set $T \subseteq \mathbb{R}^k$ is said to be homogeneous if it can be expressed as a Boolean combination of finitely many polynomial (in)equalities $p_i(x_1, \dots, x_k) \sim 0$, where each $p_i$ is homogeneous, i.e.\ all its additive monomial terms have the same degree. If there exists $s\in \mathbb{R}^k$ such that each $p_i(x - s)$ is homogeneous, then $S$ is said to be $s$-homogeneous.
\end{definition}

By definition, we have that for all $\lambda > 0$, and nonzero vectors $x$, we have $x \in T$ if and only if $\lambda x \in T$ for homogeneous sets $T$, and $s+ x \in T'$ if and only if $s + \lambda x \in T'$ for $s$-homogeneous sets $T'$.

Using cell decomposition (see \cite[Chapter 5]{algebraicgeometrybook}), a semialgebraic set $T$ can be decomposed into a finite union of semialgebraic sets $T_1, \dots, T_\ell$, where each $T_i$ is (semialgebraically) homeomorphic to a single point or $(0, 1)^{d_i}$ for some $d_i \ge 1$. The intrinsic dimension $d$ of $T$ is defined to be $\dim(T) = \max_i d_i$;
As \cite[Chapter 5.3]{algebraicgeometrybook} shows, $d$ is independent of the choice of decomposition and hence well-defined. The linear dimension of $T$ is the dimension of the smallest linear subspace of $\mathbb{R}^k$ that contains $T$.

%\begin{definition}
%A semialgebraic set $T \subseteq \mathbb{R}^k$ is said to be low-dimensional if it has intrinsic dimension at most $1$ or is contained in a linear subspace of dimension at most $3$.
%\end{definition}

%A set $X \subseteq \rel^d$ is \emph{semialgebraic} if it can be defined by a Boolean  inequalities of the form $p(x_1,\ldots,x_d) \ge 0$, where $p$ is a polynomial with rational coefficients.
%A \emph{cell} is a semialgebraic set that is homeomorphic to $(0,1)^k$ for some $k$.
%The \emph{dimension} of such a cell, written $\dim(X)$, is equal to $k$.
%Using cell decomposition, each semialgebraic set $X$ can be written as a union of disjoint cells $X_1,\ldots,X_l$.
%The dimension of $X$ then is $\dim(X) \coloneqq \max_{1\le i\le l} \dim(X_i)$.
%The dimension of a semialgebraic set is preserved under injective mappings.

Next we define a \emph{Zariski topology} on $\rel^d$.
An \emph{algebraic} set $X$ is defined by $p(x_1,\ldots,x_d) = 0$ where $p \in \rat[x_1,\ldots,x_d]$.
Note that a conjunction
\[
\bigwedge_{i\in I} p_i(x_1,\ldots,x_d) = 0
\]
of polynomial equalities can be defined by a single polynomial equality
\[
\sum_{i\in I} p_i(x_1,\ldots,x_l)^2 = 0.
\]
The closed sets in the Zariski topology are exactly the algebraic sets.

Let $f \st \rel^d \to \rel^k$ be given by $f(x) = (p_1(x)/q_1(x),\ldots,p_k(x)/q_k(x))$ where each $p_i,q_i \in \rat[x_1,\ldots,x_d]$.
Suppose $q_i(x) \ne 0$ for all $x \in \rel^d$ and $1\le i \le k$.
Consider a Zariski-closed set $X \subseteq \rel^d$ defined by a polynomial $h$.
Then $f^{-1}(X)$ is exactly the set defined by $h(p_1(x)q_2(x)\cdots q_k(x),\ldots,p_k(x)q_1(x)\cdots q_{k-1}(x)) = 0$, which is Zariski-closed.
It follows that such a function $f$ is continuous in the Zariski topology.

Given a topology, a set $X$ is irreducible if its closure $\overline{X}$ cannot be written as a union of two closed sets different from $\overline{X}$.
Irreducibility is preserved under continuous mappings.

\subsection{Linear Algebra}
We begin by recalling some standard terminology and results. The dimension of a vector space is the smallest number of (necessarily linearly independent) vectors required to span it. Given a matrix $A$, its rank is equivalently: (a) the size of the largest subset of linearly independent rows (dimension of row space); (b) the size of the largest subset of linearly independent columns (dimension of column space). The kernel of an $m\times n$ matrix $A$ is the vector space of solutions to $Ax = \mathbf{0}_m$. The rank-nullity theorem states that the rank of a matrix plus the dimension of its kernel equals the number of columns.

We follow a standard text \cite[Chapter 8]{axler} in recording some spectral properties of matrices.
\begin{definition}
Let $A \in \mathbb{C}^{k\times k}$. If $\lambda \in \mathbb{C}$ satisfies $\det(A - \lambda I) = 0$, it is called an eigenvalue of $A$. A nonzero vector $v \in \mathbb{C}^k$ that satisfies $(A - \lambda I)v = \mathbf{0}_k$ is called a (right) eigenvector of $\lambda$. A nonzero vector $v' \in \mathbb{C}^k$ that satisfies $v'^\top (A - \lambda I) = \mathbf{0}_k^\top$ is called a left eigenvector of $\lambda$.
\end{definition}

If not explicitly specified, an eigenvector means a right eigenvector. We seek to generalise the definition of eigenvectors; the following definition can be made analogously for left eigenvectors too. If not explicitly specified, an eigenvector means an order-$1$ eigenvector in the sense of the following definition.

\begin{definition}
A vector $v$ that satisfies $(A - \lambda I )^j v = \mathbf{0}_k$, but $(A - \lambda I)^{j-1} v \ne \mathbf{0}_k$, is called an order-$j$ generalised eigenvector of $\lambda$.
\end{definition}

An eigenvalue $\lambda$ is called \emph{simple} if all its generalised eigenvectors have order $1$. Note that $\lambda$ has a left order-$(j+1)$ generalised eigenvector if and only if it has a right order-$(j+1)$ generalised eigenvector if and only if $\text{rank}(A - \lambda I)^{j+1} < \text{rank}(A - \lambda I)^j$.

By definition, we observe that if $v_{j+1}$ is an order-$(j+1)$ generalised eigenvector, then $Av_{j+1}$ is of the form $\lambda v_{j+1} + v_j$, where $v_j \in \ker(A - \lambda I)^j$ but $v_j \notin \ker(A - \lambda I)^{j-1}$, i.e.\ $v_j = Av_{j+1} - \lambda v_{j+1}$ is an order-$j$ generalised eigenvector. Given $v_j$ of order $j$, let $v_j, \dots, v_1$ be a chain of generalised eigenvectors of $\lambda$ obtained thus. We can show, by a simple induction on $n$, that:
\begin{equation}
\label{powers}
A^n v_j = \lambda^n v_j + \binom{n}{1}\lambda^{n-1} v_{j-1} + \dots + \binom{n}{j-1}\lambda^{n-j+1} v_1.
\end{equation}

We immediately observe the invariance property of the following result \cite[Construction 8.20]{axler}:
\begin{lemma}
\label{invariant}
The eigenspace $V \subseteq \mathbb{C}^k$ of eigenvalue $\lambda$ is the vector space spanned by the generalised eigenvectors of the eigenvalue $\lambda$ of $A$. The dimension $\ell$ of $V$ is equal to the algebraic multiplicity of the eigenvalue $\lambda$. The space $V$ is the kernel of $(A - \lambda I)^k$, and a basis can be expressed accordingly. Moreover, $V$ is invariant under $A$, i.e.\ if $v \in V$, then $Av \in V$.
\end{lemma}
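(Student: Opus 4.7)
The plan is to establish the four claims of the lemma in order, following the standard treatment in \cite[Chapter 8]{axler}. Let $N = A - \lambda I$. The ascending chain $\{0\} \subseteq \ker N \subseteq \ker N^2 \subseteq \cdots$ of subspaces of $\mathbb{C}^k$ must stabilise: once $\ker N^{j^*} = \ker N^{j^*+1}$, any $v \in \ker N^{j^*+2}$ satisfies $Nv \in \ker N^{j^*+1} = \ker N^{j^*}$, so $v \in \ker N^{j^*+1}$, and by induction all subsequent kernels coincide. Each strict inclusion prior to stabilisation increases the dimension by at least one, so $j^* \le k$, which gives $V = \bigcup_{j \ge 1} \ker N^j = \ker N^k$. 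A basis of $V$ may therefore be selected directly from this kernel.

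Invariance is immediate from the fact that $A$ and $N^k$ commute: if $v \in V$, then $N^k(Av) = A(N^k v) = 0$, so $Av \in V$.

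For the dimension claim, I would invoke the primary decomposition theorem. Factor the characteristic polynomial as $\chi_A(x) = \prod_i (x - \lambda_i)^{m_i}$, where the $\lambda_i$ are the distinct eigenvalues with algebraic multiplicities $m_i$. Setting $V_i := \ker (A - \lambda_i I)^k$, Cayley--Hamilton together with the pairwise coprimality of the polynomials $(x - \lambda_i)^{m_i}$, combined via Bezout identities in $\mathbb{C}[x]$, gives the direct-sum decomposition $\mathbb{C}^k = \bigoplus_i V_i$. Each $V_i$ is $A$-invariant by the invariance argument above, and the restriction of $A - \lambda_i I$ to $V_i$ is nilpotent, whereas its restriction to $V_j$ for $j \ne i$ is invertible (since $\lambda_j \ne \lambda_i$ is not an eigenvalue of the restricted map). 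Hence the characteristic polynomial of the restriction of $A$ to $V_i$ is $(x - \lambda_i)^{\dim V_i}$; comparing with the global factorisation forces $\dim V_i = m_i$, as required.

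The argument presents no conceptual obstacle: every ingredient is a standard finite-dimensional linear-algebra fact, and the main care required is simply to ensure that the chain-stabilisation and primary-decomposition arguments are stated cleanly over $\mathbb{C}$ so that the characteristic polynomial splits. Since the lemma is subsequently applied only as a structural black box, the proof can be kept brief, with the dimension claim referenced explicitly to \cite[Construction 8.20]{axler}.
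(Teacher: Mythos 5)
Your proof is correct. Note, however, that the paper does not prove this lemma at all: it is stated as a standard fact and attributed to \cite[Construction 8.20]{axler} (together with Theorem 8.22 there), so there is no in-paper argument to compare against. Your argument --- kernel-chain stabilisation giving $V=\ker(A-\lambda I)^k$, invariance via commutation of $A$ with $(A-\lambda I)^k$, and the Cayley--Hamilton/primary-decomposition argument identifying $\dim V$ with the algebraic multiplicity --- is exactly the standard textbook route and fills in that citation soundly.
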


In order to state our next property, we need a notion of composition of vector spaces.
\begin{definition}[Sum of vector spaces]
\label{directsum}
Let $V_1, \dots, V_m$ be linear subspaces of $\mathbb{C}^k$ with linearly independent bases $B_1, \dots, B_m$ respectively. Their sum $V = V_1 + \dots + V_m$ is the vector space spanned by the set of vectors $B = B_1 \cup \dots \cup B_m$. Furthermore, if the set $B$ is linearly independent, then the sum is called a direct sum, and is denoted $V_1 \oplus \dots \oplus V_m$.
\end{definition}

By definition, the dimension of the direct sum of vector spaces is equal to the sum of the dimensions of the vector spaces. The definition of direct sum also implies that $\mathbf{0}_k$ is the only vector common to any pair of subspaces being summed.

Eigenspaces permit a convenient decomposition of $\mathbb{C}$ \cite[Theorem 8.22]{axler}: they are linearly independent and span the entire space.
\begin{theorem}
\label{decomposition}
Let $\lambda_1, \dots, \lambda_m$ be distinct eigenvalues of $A \in \mathbb{C}^{k \times k}$, with respective eigenspaces $V_1, \dots, V_m$. Then $V_1 \oplus \dots \oplus V_m = \mathbb{C}^k$.
\end{theorem}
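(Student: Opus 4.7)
The plan is to establish the two assertions of Theorem~\ref{decomposition} separately: first, that the sum $V_1 + \dots + V_m$ is direct, and second, that it fills all of $\mathbb{C}^k$ by a dimension count.

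For directness I would argue by induction on $m$. The base case $m = 1$ is vacuous. For the inductive step, suppose $v_1 + \dots + v_m = \mathbf{0}_k$ with $v_i \in V_i$, and apply the operator $(A - \lambda_m I)^k$ to both sides. By Lemma~\ref{invariant}, $V_m = \ker(A - \lambda_m I)^k$, so the last term is annihilated. For each $i < m$, Lemma~\ref{invariant} also gives that $V_i$ is $A$-invariant and that $V_i = \ker(A - \lambda_i I)^k$, so $A - \lambda_i I$ is nilpotent on $V_i$, meaning $\lambda_i$ is the only eigenvalue of the restriction $A|_{V_i}$. Consequently $A - \lambda_m I$ restricted to $V_i$ has only the nonzero eigenvalue $\lambda_i - \lambda_m$, and so $(A - \lambda_m I)^k$ is an invertible endomorphism of $V_i$. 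Writing $w_i = (A - \lambda_m I)^k v_i \in V_i$, the equation reduces to $w_1 + \dots + w_{m-1} = \mathbf{0}_k$, and the inductive hypothesis forces $w_i = \mathbf{0}_k$ for each $i < m$. Invertibility of $(A - \lambda_m I)^k$ on $V_i$ then gives $v_i = \mathbf{0}_k$ for $i < m$, and hence $v_m = \mathbf{0}_k$ as well, completing the induction.

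For the dimension count, Lemma~\ref{invariant} gives $\dim V_i$ equal to the algebraic multiplicity of $\lambda_i$. Since $\lambda_1, \dots, \lambda_m$ exhaust the distinct roots of the characteristic polynomial of $A$, these multiplicities sum to $k$. Combining this with directness yields $\dim(V_1 \oplus \dots \oplus V_m) = \sum_i \dim V_i = k = \dim \mathbb{C}^k$, so equality of subspaces follows. The principal obstacle is the spectral fact used repeatedly in the induction, namely that $\lambda_i$ is the sole eigenvalue of $A|_{V_i}$; once this is isolated from the identity $V_i = \ker(A - \lambda_i I)^k$ of Lemma~\ref{invariant}, the rest of the argument is a short verification.
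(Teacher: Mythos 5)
Your proof is correct, but note that the paper does not actually prove Theorem~\ref{decomposition}: it imports it wholesale from \cite[Theorem 8.22]{axler}, so there is no internal argument to compare against. What you give is the classical two-step argument: directness by induction on $m$, annihilating the $m$-th component with $(A-\lambda_m I)^k$ while observing that this operator restricts to an invertible endomorphism of each $V_i$ with $i<m$ (because $(A-\lambda_i I)|_{V_i}$ is nilpotent, so the restriction of $A-\lambda_m I$ has sole eigenvalue $\lambda_i-\lambda_m\neq 0$), followed by a dimension count. Axler's own proof of the cited theorem is organised differently, by induction on the dimension of the ambient space using the splitting $\mathbb{C}^k=\ker(A-\lambda_1 I)^k\oplus\operatorname{ran}(A-\lambda_1 I)^k$ and recursing on the range, so your route is genuinely different and arguably more elementary; the price is that your dimension count leans on the clause of Lemma~\ref{invariant} asserting $\dim V_i$ equals the algebraic multiplicity of $\lambda_i$. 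In Axler's development that fact is essentially established together with (or after) the decomposition theorem, so as a from-scratch argument this step would risk circularity; within this paper, however, Lemma~\ref{invariant} is stated as a black box that includes the multiplicity clause, so your use of it is legitimate. The only point worth making explicit is that the hypothesis must be read as $\lambda_1,\dots,\lambda_m$ being \emph{all} the distinct eigenvalues of $A$, as you implicitly assume when summing the multiplicities to $k$.
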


We note the following property.
\begin{lemma}
\label{ortho}
Let $\lambda, \eta$ be distinct eigenvalues of $A \in \mathbb{C}^{k \times k}$. Let $v_\eta$ be a generalised left eigenvector of $\eta$ and let $v_\lambda$ be a generalised right eigenvector of $\lambda$. Then $v_\eta^\top v_\lambda = 0$.
\end{lemma}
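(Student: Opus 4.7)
\medskip

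My plan is to prove the statement by induction on the sum $i+j$, where $i$ is the order of $v_\eta$ as a left generalised eigenvector and $j$ is the order of $v_\lambda$ as a right generalised eigenvector. Throughout, I shall use the chain structure recorded in the paragraph preceding equation \eqref{powers}: if $w$ is an order-$r$ right generalised eigenvector of some eigenvalue $\mu$, then $Aw = \mu w + w'$ where $w'$ is either zero (if $r=1$) or an order-$(r-1)$ generalised eigenvector of $\mu$; the analogous statement holds on the left.

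The base case $i=j=1$ is a one-line computation: since $v_\eta^\top A = \eta\, v_\eta^\top$ and $A v_\lambda = \lambda v_\lambda$, evaluating $v_\eta^\top A v_\lambda$ in two ways gives $\eta\, v_\eta^\top v_\lambda = \lambda\, v_\eta^\top v_\lambda$, and since $\eta \neq \lambda$ we conclude $v_\eta^\top v_\lambda = 0$.

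For the inductive step, assume the result for all pairs whose orders sum to less than $i+j$. Write $Av_\lambda = \lambda v_\lambda + u$ with $u$ either $\mathbf{0}_k$ or a right generalised eigenvector of $\lambda$ of order $j-1$, and $v_\eta^\top A = \eta\, v_\eta^\top + w^\top$ with $w$ either $\mathbf{0}_k$ or a left generalised eigenvector of $\eta$ of order $i-1$. Associativity of matrix multiplication gives
\begin{equation*}
(\eta\, v_\eta^\top + w^\top)\, v_\lambda \;=\; v_\eta^\top A v_\lambda \;=\; v_\eta^\top (\lambda v_\lambda + u),
\end{equation*}
that is, $(\eta-\lambda)\, v_\eta^\top v_\lambda = v_\eta^\top u - w^\top v_\lambda$. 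Both $w^\top v_\lambda$ and $v_\eta^\top u$ vanish: each is either trivially zero (when the corresponding reduced vector is $\mathbf{0}_k$) or falls under the induction hypothesis because the total order strictly decreases. Dividing by the nonzero scalar $\eta - \lambda$ yields $v_\eta^\top v_\lambda = 0$, completing the induction.

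The only potential obstacle is bookkeeping — ensuring that the ``reduced'' vectors $u$ and $w$ are genuinely generalised eigenvectors of strictly smaller order (or zero) so that the induction hypothesis applies cleanly. This is handled by the chain structure recalled above, so the argument goes through without further difficulty.
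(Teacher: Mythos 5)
Your proof is correct and follows essentially the same route as the paper: both evaluate $v_\eta^\top A v_\lambda$ in two ways, use the chain structure of generalised eigenvectors, and close the argument by induction (the paper inducts on the pair of orders under the componentwise order, you on their sum, which is an immaterial difference). No gaps.
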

\begin{proof}
We prove this by induction on the orders $i, j$ of the generalised left and right eigenvectors $v_\eta, v_\lambda$ respectively. We have that $(i, j) > (i', j')$ if $i \ge i'$ and $j \ge j'$ with at least one inequality being strict.

We prove the base case (where $v_\eta, v_\lambda$ are order-$1$). Observe by associativity that $v_\eta^\top A v_\lambda = \eta v_\eta^\top v_\lambda = \lambda v_\eta^\top v_\lambda$. Since $\lambda, \eta$ are distinct, it must be that $v_\eta^\top v_\lambda = 0$.

Now for the induction step (where $v_\eta, v_\lambda$ are of order $i, j$ respectively), we assume that $(v'_\eta)^\top v'_\lambda = 0$ for all $v'_\eta, v'_\lambda$ of orders $i', j'$, with $(i, j) > (i', j')$. Observe by associativity that $v_\eta^\top A v_\lambda = (\eta v_\eta + v'_\eta)^\top v_\lambda =  v_\eta^\top (\lambda v_\lambda + v'_\lambda)$.

We have that $v'_\eta$ is of order $i-1$, and $v'_\lambda$ is of order $j-1$ \footnote{If either of $i, j$ is $1$, then the proof is even simpler because the corresponding $v'$ is the zero vector}. Thus by the induction hypothesis, $(v'_\eta)^\top v_\lambda = v_\eta^\top v'_\lambda = 0$. Substituting into the above equality, we get that $v_\eta^\top A v_\lambda = \eta v_\eta^\top v_\lambda = \lambda v_\eta^\top v_\lambda$ as before:  it must be that $v_\eta^\top v_\lambda = 0$.
\qed
\end{proof}

\subsection{Rational Stochastic Matrices}
We use rational (column-)stochastic matrices to specify Markov chains. We shall apply linear-algebraic results to state their important properties.
\begin{definition}
A matrix $M \in \mathbb{Q}^{k \times k}$ is said to be stochastic if each entry is non-negative, and $\mathbf{1}_k^\top M = \mathbf{1}_k^\top$, i.e. the entries of each column sum up to $1$.
\end{definition}

Very clearly, $1$ is an eigenvalue of $M$: $\det(M - I) = 0$ since by definition, $\mathbf{1}_k^\top(M - I) = \mathbf{0}_k^\top$. The following property holds.

\begin{lemma}
\label{eigenatmostone}
Let $\lambda$ be an eigenvalue of a stochastic matrix $M$. We have that $|\lambda| \le 1$. Moreover, if $|\lambda | = 1$, then $\lambda$ must be a root of unity and a simple eigenvalue (all generalised eigenvectors are of order $1$).
\end{lemma}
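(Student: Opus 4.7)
I would prove the lemma in three parts: (a) $|\lambda|\le 1$; (b) if $|\lambda|=1$ then $\lambda$ is simple; (c) if $|\lambda|=1$ then $\lambda$ is a root of unity. The natural order treats them as listed.

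For (a), I would use the dual viewpoint. Since $\lambda$ is an eigenvalue of $M$, it is also an eigenvalue of $M^\top$, which is row-stochastic (the rows of $M^\top$ are the columns of $M$, each summing to $1$). Pick a right eigenvector $v$ of $M^\top$ for $\lambda$ and an index $i$ with $|v_i|$ maximal. From $\lambda v_i = \sum_j M_{ji} v_j$ together with $\sum_j M_{ji} = 1$, the triangle inequality and maximality of $|v_i|$ yield $|\lambda|\cdot|v_i|\le \sum_j M_{ji}|v_j|\le|v_i|$, whence $|\lambda|\le 1$.

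For (b), I would argue by contradiction. Suppose $\lambda$ admits a generalized eigenvector $v_j$ of order $j\ge 2$ with associated Jordan chain $v_j, v_{j-1},\ldots, v_1$. Equation~(\ref{powers}) gives
\[
M^n v_j \;=\; \lambda^n v_j + n\lambda^{n-1} v_{j-1} + \binom{n}{2}\lambda^{n-2} v_{j-2} + \cdots.
\]
Since $v_1,\ldots,v_j$ are linearly independent, the coordinate of $M^n v_j$ along $v_{j-1}$ is exactly $n\lambda^{n-1}$, of modulus $n$ because $|\lambda|=1$. Hence $\|M^n v_j\|$ grows without bound in $n$. On the other hand, a product of column-stochastic matrices is column-stochastic, so every entry of $M^n$ lies in $[0,1]$; in particular $\|M^n v_j\|_1 \le \|v_j\|_1$ uniformly in $n$, a contradiction.

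For (c), I would invoke the classical Perron--Frobenius theorem for nonnegative matrices. Decomposing the state space of the Markov chain associated with $M$ into its strongly connected components: transient components contribute only eigenvalues of modulus strictly less than~$1$, while each closed (essential) component restricts $M$ to an irreducible stochastic submatrix whose peripheral eigenvalues, by Perron--Frobenius, are precisely the $p$-th roots of unity, where $p$ is the period of the component. Combining across components shows that every unit-modulus eigenvalue of $M$ is a root of unity, completing the proof. The principal obstacle lies here: a fully elementary proof that avoids Perron--Frobenius would proceed via Kronecker's theorem, after first establishing that $\lambda$ is an algebraic integer (all Galois conjugates of $\lambda$ are already eigenvalues of $M$ and thus of modulus at most~$1$); but algebraic integrality itself essentially rests on the same structural facts about nonnegative matrices.
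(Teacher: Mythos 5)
Your proof is correct, but your route to the root-of-unity claim is genuinely different from the paper's. Parts (a) and (b) are essentially the paper's argument seen from the other side: the paper works with a left eigenvector of $M$ (equivalently your eigenvector of $M^\top$), scaled so that its largest entry has modulus $1$, and observes that the entries of $v^\top M^n$ are convex combinations of the entries of $v$, hence trapped in the convex hull $T$ of those entries; your triangle-inequality bound in (a) and your $\ell_1$-contraction bound in (b) (note the uniform bound $\|M^n x\|_1 \le \|x\|_1$ really comes from the column sums of $M^n$ being $1$, not merely from its entries lying in $[0,1]$, though the latter still yields a bound $k\|x\|_1$, which also suffices) play exactly the role of that convex-hull containment, pitted against the linear growth forced by Equation~(\ref{powers}). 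For (c), however, the paper does not invoke Perron--Frobenius or any decomposition into communicating classes: it squeezes the root-of-unity conclusion out of the same convex hull, noting that $T$ meets the unit circle only at its finitely many vertices, so if $|v_i| = 1$ then the points $\lambda^n v_i$, which all lie in $T$, range over finitely many values, forcing $\lambda$ to be a root of unity. So the ``principal obstacle'' you describe does not actually arise: the paper's argument is elementary and self-contained, avoiding both Perron--Frobenius and the Kronecker/algebraic-integrality route (which would in any case be delicate, since eigenvalues of a rational stochastic matrix need not be algebraic integers a priori). Your approach is legitimate and buys extra structural information---it identifies the unit-modulus eigenvalues as $p$-th roots of unity for $p$ the periods of the closed communicating classes, matching the intuition in the paper's concluding discussion---but it rests on heavier, black-box machinery (Perron--Frobenius for irreducible blocks, spectral radius strictly below $1$ for transient blocks, and the SCC block-triangularisation of the spectrum) where the paper needs only a two-line convex-geometry observation.
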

\begin{proof}
Consider a left eigenvector $v = (v_1, \dots, v_k)$ of $\lambda$ chosen such that the entry of maximum modulus has modulus $1$, and let $T \subset \mathbb{C}$ be the bounded polytope defined as the convex hull of the points $v_1, \dots, v_k \in \mathbb{C}$. Clearly, $T$ is contained in the unit circle centred at the origin. Now, since the entries of each column of $M$ are non-negative and sum up to $1$, each coordinate of $v^\top M = (\lambda v_1, \dots, \lambda v_k)$ is a convex combination of $v_1, \dots, v_k$, and is hence contained in $T$, and thus the unit circle. It follows that $|\lambda| \le 1$.

Moreover, observe that $T$ can only intersect the unit circle at its corners. Now observe that for all $n\ge 1$, $\lambda^n v^\top = v^\top M^n$, and that the entries of each column of $M^n$ are non-negative and sum up to $1$. Thus, by the same arguments as above, each entry of $(\lambda^n v_1, \dots, \lambda^n v_k)$ must be contained in $T$. Suppose $|v_i| = 1$. Then each $\lambda^n v_i$ must be one of the finitely many corners of the convex hull of $v_1, \dots, v_k$: this is only possible if $\lambda$ is a root of unity.

Now assume that an eigenvalue $\lambda$ such that $\lambda^d = 1$ were not simple. Then there is an order-$2$ left eigenvector $v = (v_1, \dots, v_k)$ of $\lambda$ chosen such that one of the entries $v_i$ with maximum modulus is $1$. Define $T$ as before, and note for all $n$ that each coordinate of $v^\top M^n$ must lie in $T$ and hence the unit circle. However, by equation \ref{powers} we have
$
v^\top M^{nd} = v^\top + nd (v')^\top
$.
The coordinates of $v^\top M^{nd} $ clearly cannot lie in the unit circle for arbitrarily large $n$: a contradiction.
\qed
\end{proof}

\section{Proofs of the decidability results}
\label{decidability proofs}
In this section, we establish our decidability results. We start with an instance $(\mu, M, \mathcal{T}, \mathcal{L})$ of the model-checking problem for Markov chains, where $M$ has dynamical dimension $d_1$, the sets of $\mathcal{T}$ have linear dimension at most $d_2$ and intrinsic dimension at most $d_3$. We show how to reduce this instance to an instance $(v, A, \mathcal{T}', \mathcal{L'})$ of the model-checking problem for linear dynamical systems, where $A$ is an invertible $d_1 \times d_1$ matrix, the sets of $\mathcal{T}$ have linear dimension at most $\min(d_1, d_2-1)$ and intrinsic dimension at most $\min(d_1, d_3)$. This would precisely be the proof of Theorem~\ref{aux}. Given the state-of-the-art Theorem~\ref{tamedecidability}, this implies that instances $(\mu, M, \mathcal{T}, \mathcal{L})$ of the form where $M$ has dynamical dimension at most $3$ or the sets of $\mathcal{T}$ have linear dimension at most $4$ or intrinsic dimension at most $1$ are decidable, thus proving Theorem~\ref{decidability}.

We perform the reduction to prove Theorem~\ref{aux} through three lemmata.
\begin{enumerate}
\item Lemma~\ref{zeroelim} implies that it suffices to consider linear dynamical systems with invertible update matrices.
\item Lemma~\ref{rootelim} implies that assuming the update matrix is non-degenerate\footnote{
A matrix $A$ is said to be \emph{non-degenerate} if for every pair of distinct eigenvalues $\lambda, \lambda'$, the ratio $\lambda/\lambda'$ is not a root of unity.} does not lose any generality.
\item Lemma \ref{mctolds} takes as input an instance of a linear dynamical system model-checking problem obtained upon performing the preprocessing of Lemmata \ref{zeroelim} and \ref{rootelim} on $(\mu, M, \mathcal{T}, \mathcal{L})$. It reduces its input into an instance of the model-checking problem with a $d_1 \times d_1$ invertible matrix and semialgebraic sets of lower dimensions.
\end{enumerate}

We remark that the lemmata are elementary, but not immediate, as they preserve both the rationality and the spectra of the involved matrices.

We start with an instance $(\mu, M, \mathcal{T}, \mathcal{L})$, where $M$ is a stochastic matrix, and the eigenvalues of $M$ are $1, \nu_1, \dots, \nu_{j'}, \gamma_1, \dots, \gamma_j, $ and possibly $0$. By Lemma \ref{eigenatmostone}, the eigenvalues $\nu_1, \dots, \nu_{j'}$ of modulus $1$ must be roots of unity, and the eigenvalues $\gamma_1, \dots, \gamma_j$ have modulus less than $1$. Furthermore, the eigenvalues $1, \nu_1, \dots, \nu_{j'}$ must be simple, i.e.\ their generalised eigenvectors have order $1$. We let $\ell_1$ be the multiplicity of the eigenvalue $0$. The following lemma shows that we can obtain a dynamical system with the same spectral properties as stated above sans the eigenvalue $0$.
\begin{lemma}
\label{zeroelim}
Let $\mu \in \mathbb{Q}^{k+\ell}$, and let $M \in \mathbb{Q}^{(k + \ell)\times (k+\ell)}$ be a matrix with distinct eigenvalues $0, \lambda_1, \dots, \lambda_j$ with the eigenvalue $0$ having (algebraic) multiplicity $\ell$. We can compute a rank-$k$ matrix $Q \in \mathbb{Q}^{(k+\ell)\times k}$, an invertible matrix $B \in \mathbb{Q}^{k \times k}$ and $v \in \mathbb{Q}^k$ such that:
\begin{enumerate}
\item For all $n \ge \ell$, we have $M^n \mu = QB^n v$.
\item Let $n \ge \ell$. For any semialgebraic set $T \subseteq \mathbb{R}^{k +\ell}$, $M^n \mu \in T$ if and only if $B^n v \in T'$, where $T' = \{x \in \mathbb{R}^k: Qx \in T\}$ and has intrinsic and linear dimensions at most those of $T$.
\item The eigenvalues of $B$ are $\lambda_1, \dots, \lambda_j$. Moreover, for any $i \ge 0$ and $\lambda \in \mathbb{C}_{\ne 0}$, $y$ is a generalised order-$i$ eigenvector of $B$ corresponding to $\lambda$ if and only if $Qy$ is a generalised order-$i$ eigenvector of $M$ corresponding to $\lambda$.
\end{enumerate}
\end{lemma}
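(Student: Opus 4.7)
\textbf{Proof plan for Lemma~\ref{zeroelim}.} Since $M$ has rational entries and the eigenvalue $0$ has algebraic multiplicity $\ell$, the characteristic polynomial factors over $\mathbb{Q}$ as $\chi_M(x) = x^\ell p(x)$ with $p(0) \ne 0$ and $\deg p = k$. Because $\gcd(x^\ell, p(x)) = 1$, the primary decomposition theorem yields the rational direct sum $\mathbb{Q}^{k+\ell} = \ker(M^\ell) \oplus \ker(p(M))$, and by Cayley--Hamilton $\ker(p(M)) = \operatorname{im}(M^\ell) =: V$ has $\mathbb{Q}$-dimension exactly $k$. My first step is to compute a rational basis of $V$ (for instance by rational Gaussian elimination on $p(M)$) and let $Q \in \mathbb{Q}^{(k+\ell) \times k}$ be the matrix whose columns are this basis; by construction $Q$ has full column rank.

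Since $V$ is $M$-invariant, each column of $MQ$ is a unique $\mathbb{Q}$-linear combination of the columns of $Q$, so there is a unique $B \in \mathbb{Q}^{k \times k}$ with $MQ = QB$ (concretely, $B = (Q^\top Q)^{-1} Q^\top M Q$, where $Q^\top Q$ is rational and invertible because $Q$ has full column rank). The restriction $M\vert_V$ carries precisely the nonzero eigenvalues of $M$ with their full multiplicities, so $B$ is invertible over $\mathbb{Q}$. Iterating $MQ = QB$ gives $M^n Q = Q B^n$ for all $n \ge 0$. Now $M^\ell \mu \in \operatorname{im}(M^\ell) = V$, so there is a unique $v_0 \in \mathbb{Q}^k$ with $Q v_0 = M^\ell \mu$; I then define $v := B^{-\ell} v_0 \in \mathbb{Q}^k$. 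For every $n \ge \ell$,
\[
M^n \mu \;=\; M^{n-\ell}(M^\ell \mu) \;=\; M^{n-\ell} Q v_0 \;=\; Q B^{n-\ell} v_0 \;=\; Q B^n v,
\]
which gives item~1. Item~2 follows immediately: the linear injection $x \mapsto Qx$ is a semialgebraic homeomorphism onto its image, so $T' = \{x : Qx \in T\}$ is semialgebraically homeomorphic to $T \cap \operatorname{im}(Q) \subseteq T$, whence both its intrinsic dimension and its linear dimension cannot exceed those of $T$.

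For item~3, I would use the intertwining identity $(M - \lambda I)^i Q = Q(B - \lambda I)^i$, which is an immediate consequence of $MQ = QB$. Combined with the injectivity of $Q$, this yields the equivalences $(B - \lambda I)^i y = 0 \iff (M - \lambda I)^i (Q y) = 0$ and $(B - \lambda I)^{i-1} y \ne 0 \iff (M - \lambda I)^{i-1} (Q y) \ne 0$, so $y$ is an order-$i$ generalised eigenvector of $B$ for $\lambda$ iff $Qy$ is for $M$. I expect the only genuinely delicate step to be keeping every object rational: defining $V$ as the $\mathbb{Q}$-kernel of $p(M)$ (rather than via complex eigenvectors) is what guarantees that $Q$, $B$, $v_0$, and hence $v = B^{-\ell} v_0$, all remain rational, while the spectral and dimension bookkeeping then follows mechanically from the intertwining $MQ = QB$ and the injectivity of $Q$.
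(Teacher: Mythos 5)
Your proof is correct, and it reaches the same construction as the paper by a somewhat different route. The paper works with the complex generalised eigenspace decomposition: it computes rational bases for both the generalised $0$-eigenspace $V_0$ and its invariant complement $W$ (arguing via the left/right eigenvector orthogonality of Lemma~\ref{ortho} and rank--nullity that $W$ is spanned by $k$ independent columns of $M^\ell$, equivalently is the kernel of a rational matrix $H$), assembles the full change of basis $R=\begin{bmatrix}P & Q\end{bmatrix}$, reads off $B=Q'MQ$ from the block-diagonal form of $R^{-1}MR$, and obtains $v$ from the decomposition $\mu=Pu+Qv$. You instead invoke the primary decomposition of $\chi_M=x^\ell p(x)$ over $\mathbb{Q}$, need only a rational basis of $V=\ker p(M)=\operatorname{im}(M^\ell)$, define $B$ directly from the intertwining relation $MQ=QB$ (solved by the rational left inverse $(Q^\top Q)^{-1}Q^\top$), and recover $v=B^{-\ell}v_0$ from $Qv_0=M^\ell\mu$; this yields the same $v$ as the paper's decomposition. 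What each buys: your version avoids computing $P$ and $R^{-1}$ altogether, and the identity $(M-\lambda I)^iQ=Q(B-\lambda I)^i$ together with injectivity of $Q$ gives item~3 in both directions in one stroke, where the paper argues eigenvectors first and then inducts; the paper's explicit $P'$, $Q'$ blocks, on the other hand, are reused later (in Lemma~\ref{mctolds} and its corollaries), so its bookkeeping is not wasted. Both arguments keep every object rational, which is the one genuinely delicate requirement, and your dimension argument for item~2 (a linear injection is a semialgebraic homeomorphism onto its image, so intrinsic dimension transfers and cannot exceed that of the superset $T$, and linearity of the map bounds the linear dimension) matches the paper's.
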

\begin{proof}
Let $V_0, V_{\lambda_1}, \dots, V_{\lambda_j}$ denote the eigenspaces of the respective eigenvalues. Recall that by Theorem~\ref{decomposition} we can define the vector space $W =  V_{\lambda_1} \oplus \cdots \oplus V_{\lambda_j}$ as a direct sum, and that $V_0 \oplus W = \mathbb{C}^k$. By Lemma \ref{invariant} and the properties implied by Def.\ \ref{directsum}, we have that the dimension of $V_0$ (over $\mathbb{C}$) is $\ell$, $V_0 \cap W = \{\mathbf{0}_{k+\ell}\}$, and that the dimension of $W$ is $k$.

The decomposition into $V_0$ and $W$ is the key ingredient of the proof because these are invariant subspaces. Indeed, if $z \in V_0$ then by Equation (\ref{powers}) we have $M^\ell z = \mathbf{0}_{k+\ell}$. For a nonzero $w \in W$, we apply Lemma \ref{invariant} to each of the components $V_\lambda$ of $W$ to conclude that $Mw \in W$ and that $Mw$ is moreover nonzero since $w$ is not in the eigenspace of $0$.

Rational bases for $V_0$ and $W$ can effectively be computed. A rational basis for $V_0$ can be obtained by solving the rational homogeneous linear system $M^\ell x = \mathbf{0}_{k+\ell}$, one for $W$ can be obtained by taking $k$ linearly independent columns of $M^\ell$. To argue the latter, recall that by Lemma \ref{ortho}, if $v'$ is in the left eigenspace $V_0'$ of $0$ and $w \in W$, then $(v')^\top w = 0$. A rational basis for $V_0'$ can be obtained similarly to the one for $V_0$: let its $\ell$ vectors be the rows of $H$. By the above observation, $W$, which has dimension $k$ over $\mathbb{C}$, is also contained in the kernel of $H$, which, by the rank-nullity theorem, has dimension $k$. It follows that $W$ is indeed the kernel of $H$, and a rational basis can be obtained by solving a rational homogeneous linear system.

We take $P, Q$ to respectively be the matrices whose columns form the thusly computed rational bases of $V_0, W$.  Note that the matrix $R = \begin{bmatrix}P & Q\end{bmatrix}$ is invertible. We take $P', Q'$ such that $R^{-1} = \begin{bmatrix}P' \\ Q'\end{bmatrix}$. This choice establishes that $P', Q', P, Q$ have full rank. In particular, we remark that $Q$ defines an invertible map from $\mathbb{R}^k$ to $W$, and is inverted by $Q'$, since indeed $Q'Q = I$.

Now, construct $D \in \mathbb{Q}^{(k+\ell)\times (k+\ell)}$ as
\begin{align*}
D &= R^{-1}M R = \begin{bmatrix} P' \\ Q' \end{bmatrix}M\begin{bmatrix}P & Q\end{bmatrix} \\
&= \begin{bmatrix} P' \\ Q' \end{bmatrix}\begin{bmatrix}MP & MQ\end{bmatrix}\\
&= \begin{bmatrix}
P'MP & P'MQ \\
O & Q'MQ
\end{bmatrix} \text{ (by definition of $R^{-1}$)} \\
 &= \begin{bmatrix}
P'MP & O \\
O & Q'MQ
\end{bmatrix} \text{ (since each column of $MQ$ is still in $W$)} \\
 &= \begin{bmatrix}
P'MP & O \\
O & B
\end{bmatrix} \text{ (defining the bottom right $k\times k$ matrix $B = Q'MQ$)}.
\end{align*}

We observe that for $n \ge \ell$
$$
D^n = R^{-1}M^n R = \begin{bmatrix}
P'M^nP & O \\
O & B^n
\end{bmatrix} = \begin{bmatrix}
O & O \\
O & B^n
\end{bmatrix}
$$
since $M^\ell P = O$ because $P$ spans the eigenspace of $0$.

We can compute the unique $u \in \mathbb{Q}^\ell, v \in \mathbb{Q}^k$ such that $\mu = Pu + Qv$. We now observe that for $n \ge \ell$
\begin{align*}
M^n \mu &= RD^nR^{-1}\mu = RD^nR^{-1}R\begin{bmatrix}u \\ v\end{bmatrix} \\
&= R\begin{bmatrix}
O & O \\
O & B^n
\end{bmatrix}
\begin{bmatrix}u \\ v\end{bmatrix} = \begin{bmatrix}P & Q\end{bmatrix}\begin{bmatrix}\mathbf{0}_\ell \\ B^nv\end{bmatrix}
= QB^n v.
\end{align*}
This establishes requirement (1) of the statement.

It is obvious that for $n \ge \ell$, $M^n \mu \in T$ if and only if $B^n v \in  \{x: Qx \in T\}$. We observe that for $n \ge \ell$, $M^n \mu \in W$. Thus, $M^n \mu \in T$ if and only if $M^n \mu \in T \cap W$. Since $W$ is a linear subspace, the linear dimension of $T \cap W \subseteq T$ is at most the linear dimension of $T$, and the intrinsic dimension also cannot increase \cite[Proposition 5.28]{algebraicgeometrybook}. Now, recall that $Q'$ is an invertible linear map from $W$ to $\mathbb{R}^k$, and hence a homeomorphism from $T \cap W$ to $T'$. The linear and intrinsic dimensions of $T'$ are hence the same as those of $T \cap W$. This establishes requirement (2) of the statement.

Finally, we prove that $B = Q'MQ$ has almost the same spectral properties as that of $M$. We recall that $Q$ is a linear map from $\mathbb{R}^k$ to $W$, the space $W$ is invariant under $M$ and moreover that $M$ never maps a nonzero $w \in W$ to the zero vector, and $Q'$ inverts $Q$. This already establishes that $Bx \ne \mathbf{0}$ for nonzero $x$, and hence the invertibility of $B$. We shall work with pairs $y \in \mathbb{C}^k, w \in W$ such that $w = Qy, y = Q'w$.

Now, $By = Q'MQy = Q'Mw$. Also, $\lambda y = \lambda Q'w$. If $By = \lambda y$, we also have $Q'(Mw) = Q'(\lambda w)$. The images of two nonzero vectors in $W$ under the invertible $Q'$ are equal: hence the vectors must be equal. We thus have $y$ is an eigenvector of $B$ with eigenvalue $\lambda$ only if $w = Qy$ is an eigenvector of $M$ with eigenvalue $\lambda$.

Conversely, $Q'Mw = Q'MQy = By$, and $Q'(\lambda w) = Q'Q (\lambda y) = \lambda y$. If $Mw = \lambda w$, we must also have $By = \lambda y$. This establishes that $B$ has the same set of eigenvectors as $M$ (for $\lambda\ne 0$), with $Q'$ mapping the eigenvectors of $M$ to those of $B$.

We can apply the above reasoning in a straightforward induction on the order of generalised eigenvectors to prove that this mapping holds for the entire eigenspaces of each eigenvalue. This establishes requirement (3) of the statement, and completes the proof of Lemma \ref{zeroelim}.
\qed
\end{proof}

Let $\alpha$ be the characteristic word of the dynamical system $(M, \mu)$ with respect to $T$. Take $\mathcal{L}_1 =\{\beta: \alpha(0)\alpha(1)\cdots\alpha(\text{mult}_0-1)\cdot \beta \in \mathcal{L}\} $. It is clear that $\mathcal{L}$ is $\omega$-regular. Lemma \ref{zeroelim} thus reduces $(\mu, M, \mathcal{T}, \mathcal{L})$ to $(\mu_1, M_1, \mathcal{T}_1, \mathcal{L}_1)$, where $M_1$ is invertible, and whose eigenvalues of modulus $1$ are all simple and $c$-th roots of unity. We also have that the dimensions of sets in $\mathcal{T}_1$ are at most those in $\mathcal{T}$. We now convert all eigenvalues that are roots of unity to $1$.

\begin{lemma}
\label{rootelim}
Let $v \in \mathbb{Q}^k$, let $B \in \mathbb{Q}^{k \times k}$ be an invertible matrix with eigenvalues $\lambda_1, \dots, \lambda_j$, and let $c \in \mathbb{N} \backslash\{0\}$. Let $\mathcal{T} = \{T_1, \dots, T_h\}$ be a collection of semialgebraic sets, and let $\mathcal{L}$ be an $\omega$-regular language over $2^{\mathcal{T}}$. The following hold:
\begin{enumerate}
\item The matrix $B^c$ has eigenvalues $\lambda_1^c, \dots, \lambda_j^c$. If $v$ is an order-$i$ generalised eigenvector of some eigenvalue $\lambda$ of $B$, then it is also an order-$i$ generalised eigenvector of eigenvalue $\lambda^c$ of $B^c$, and conversely.
\item We can construct $\mathcal{T}'$ and an $\omega$-regular language $\mathcal{L'}$ over $2^{\mathcal{T}'}$ such that the characteristic word $\alpha$ of $B, v$ with respect to $\mathcal{T}$ is in $\mathcal{L}$ if and only if the characteristic word $\alpha'$ of $B^c, v$ with respect to $\mathcal{T'}$ is in $\mathcal{L}'$.
\end{enumerate}
\end{lemma}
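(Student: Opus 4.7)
The plan is to establish part (1) via an algebraic factorisation of $B^c - \lambda^c I$, and then leverage part (1) to prove part (2) through a $c$-blocking argument that groups consecutive letters of the characteristic word.

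For part (1), I would begin from the polynomial identity
\[
z^c - \lambda^c = (z - \lambda)\, P_\lambda(z), \text{ where } P_\lambda(z) = \sum_{s=0}^{c-1} \lambda^{c-1-s} z^s.
\]
Substituting $z = B$ yields $B^c - \lambda^c I = (B - \lambda I) \cdot P_\lambda(B)$, and the two factors commute, being polynomials in $B$. The key observation is that $P_\lambda(B)$ is invertible on the generalised eigenspace $V_\lambda$ of $\lambda$ for $B$. Indeed, $P_\lambda(\lambda) = c \lambda^{c-1}$ is nonzero since $B$ is invertible (so $\lambda \ne 0$), and writing $P_\lambda(z) = c\lambda^{c-1} + (z-\lambda) Q_\lambda(z)$ for some polynomial $Q_\lambda$, we get $P_\lambda(B)|_{V_\lambda} = c \lambda^{c-1} I + N$, where $N = (B-\lambda I) Q_\lambda(B)$ is nilpotent on $V_\lambda$; such an operator (nonzero scalar plus nilpotent) is invertible. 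Since $(B-\lambda I)$ and $P_\lambda(B)$ commute, $(B^c - \lambda^c I)^i = (B - \lambda I)^i P_\lambda(B)^i$, and on $V_\lambda$ (which is invariant under both operators) the invertibility of $P_\lambda(B)|_{V_\lambda}$ yields $\ker(B - \lambda I)^i \cap V_\lambda = \ker(B^c - \lambda^c I)^i \cap V_\lambda$ for every $i$. Hence a vector $v \in V_\lambda$ is an order-$i$ generalised eigenvector of $\lambda$ for $B$ precisely when it is an order-$i$ generalised eigenvector of $\lambda^c$ for $B^c$. The eigenvalue correspondence $\lambda_s \mapsto \lambda_s^c$ then follows from Theorem~\ref{decomposition} applied to $B$ and $B^c$.

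For part (2), I would define
\[
\mathcal{T}' = \ldst B^{-r}(T) \st 0 \le r \le c-1,\, T \in \mathcal{T} \rdst, \text{ where } B^{-r}(T) = \ldst x \in \rel^k \st B^r x \in T \rdst.
\]
Since $B^r$ has rational entries and is a linear bijection, substituting $B^r x$ into the polynomial (in)equalities defining $T$ gives a semialgebraic description of $B^{-r}(T)$, and the invertible linear map $B^r$ preserves both intrinsic and linear dimensions. The defining property is that $(B^c)^n v \in B^{-r}(T)$ if and only if $B^{cn+r} v \in T$, so the letter $\alpha'(n) \in 2^{\mathcal{T}'}$ of the characteristic word of $(B^c, v)$ with respect to $\mathcal{T}'$ encodes exactly the block $\alpha(cn)\,\alpha(cn{+}1)\cdots \alpha(cn{+}c{-}1) \in (2^{\mathcal{T}})^c$. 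To construct $\mathcal{L}'$, I would take a deterministic Muller automaton $\Acal$ for $\mathcal{L}$ and build $\Acal'$ over $2^{\mathcal{T}'}$ whose single-letter transition on $\ell \in 2^{\mathcal{T}'}$ simulates the $c$ consecutive transitions of $\Acal$ on the corresponding block in $(2^{\mathcal{T}})^c$. Taking the Muller acceptance table of $\Acal'$ to be $\mathcal{F}' = \mathcal{F}$ (i.e.\ the same subsets of $Q$) makes $\Acal'$ accept $\alpha'$ exactly when $\Acal$ accepts $\alpha$, since the $\Acal'$-run passes through the states that $\Acal$ visits at indices $0, c, 2c, \dots$, while the intermediate states inside each block are visited infinitely often if and only if the corresponding boundary states are.

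I expect the main technical obstacle to be the invertibility of $P_\lambda(B)$ on $V_\lambda$, since this is precisely what upgrades the easy forward implication $(B^c - \lambda^c I)^i v = 0$ to an equivalence of the generalised-eigenvector orders for $B$ and for $B^c$; without it, one only gets that the order of $v$ for $B^c$ is at most its order for $B$, which would be insufficient to track the spectral structure of the reduced system (in particular, the simplicity of eigenvalue $1$ in $B^c$ inherited from the simple roots of unity $\nu_1,\dots,\nu_{j'}$). Once this algebraic fact is in place, part (2) amounts to a straightforward encoding between $\omega$-words and a standard product-style construction on deterministic Muller automata.
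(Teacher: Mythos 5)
Your part (1) is correct, and it takes a genuinely different route from the paper: you factor $B^c-\lambda^c I=(B-\lambda I)P_\lambda(B)$ and use invertibility of $P_\lambda(B)$ on the generalised eigenspace $V_\lambda$ (nonzero scalar plus nilpotent) to get the equality of kernels $\ker(B-\lambda I)^i\cap V_\lambda=\ker(B^c-\lambda^c I)^i\cap V_\lambda$ directly, whereas the paper proves the forward direction by induction on the order using Equation~(\ref{powers}) and then recovers the converse by a dimension-counting argument over all eigenspaces (Lemma~\ref{invariant}, Theorem~\ref{decomposition}). Your argument is arguably tighter, since it yields the two-sided equivalence within each $V_\lambda$ without counting. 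Your choice of $\mathcal{T}'$ via the preimages $T_{i,r}=\{x: B^r x\in T_i\}$ and the block correspondence $T_i\in\alpha(qc+r)\Leftrightarrow T_{i,r}\in\alpha'(q)$ is exactly the paper's.

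However, part (2) has a genuine gap in the automaton construction: keeping the state set $Q$ and setting $\mathcal{F}'=\mathcal{F}$ is incorrect. The run of your $\mathcal{A}'$ on $\alpha'$ records only the states that $\mathcal{A}$ occupies at positions $0,c,2c,\dots$, so $\text{Inf}_{\mathcal{A}'}(\alpha')$ can be a \emph{proper} subset of $\text{Inf}_{\mathcal{A}}(\alpha)$, and Muller acceptance requires exact membership of the infinity set in the table. Your supporting claim that ``intermediate states inside each block are visited infinitely often if and only if the corresponding boundary states are'' does not make these two sets equal. Concretely, take $\mathcal{A}$ with states $q_0,q_1$, a letter $a$ that swaps them, $\mathcal{F}=\{\{q_0,q_1\}\}$, $c=2$, and $\alpha=aaaa\cdots$: then $\mathcal{A}$ visits both states infinitely often and accepts, but your $\mathcal{A}'$ stays in $q_0$ forever, so $\text{Inf}_{\mathcal{A}'}(\alpha')=\{q_0\}\notin\mathcal{F}$ and it rejects. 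The missing idea is to remember the intermediate states: the paper enlarges the state space to $Q\times 2^Q$, where the second component stores the set of states traversed while simulating the length-$c$ block, and declares $F'$ accepting if and only if $\bigcup_{(q,S)\in F'}S\in\mathcal{F}$ (equivalently, one may invoke closure of $\omega$-regular languages under inverse images of the letter-to-block substitution $\varphi$). With that repair, the rest of your part (2) goes through.
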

\begin{proof}
It is immediately clear that $B^c$ has $\lambda_1^c, \dots, \lambda_j^c$ among its eigenvalues. We prove the retention of generalised eigenvectors by induction on the order $i$. The base case $i = 1$ is trivial: $Bv = \lambda v$ implies $B^cv = \lambda^c v$. For the induction step, suppose the claim holds for $i$, and let $v_{i+1}$ be an order-$(i+1)$ generalised vector of $\lambda$ of $B$. By Equation (\ref{powers}), $B^c v_{i+1} = \lambda^c v_{i+1} + (f_i v_i + \dots + f_{i-1}v_1)$. The latter summand, by the induction hypothesis, is an order-$i$ generalised eigenvector of $\lambda^c$, making $v_{i+1}$ an order-$(i+1)$ generalised eigenvector.

Since we have identified $k$ generalised eigenvectors of $B^c$, there can be no more. We can also use Lemma \ref{invariant} to conclude that since a basis of eigenvectors of $B$ is also a basis of eigenvectors of $B^c$, the multiplicities of $\lambda_1^c, \dots, \lambda_j^c$ add up to $k$, and there are no other eigenvalues of $B^c$. This establishes property (1).

The key observation to prove property (2) is that $B^{qc + r}v = B^r (B^c)^q v$. Thus, for any $n = qc + r$ and semialgebraic set $T$, we have $B^n v \in T$ if and only if $(B^c)^q v \in T_r = \{x: B^r x \in T\}$. Note that $T_r$ is the pre-image of $T$ under $B^r$. Since $B$ is invertible, the map from $T$ to $T_r$ is a homeomorphism, and both sets have the same dimensions. Define $\mathcal{T}' = \{T_{1,0}, \dots, T_{h, 0}, \dots, T_{1, c-1}, \dots, T_{h, c-1}\}$ where $T_{i, r} = \{x : B^{r}x \in T_i\}$.

It is now easy to observe that the characteristic word $\alpha$ of $B, v$ is a ``flattening'' of $\alpha'$ of $B^c, v$. Formally, $T_i \in \alpha(qc + r)$ if and only if $T_{i, r} \in \alpha'(q)$. Define $\varphi: 2^{\mathcal{T}'} \rightarrow \left(2^{\mathcal{T}}\right)^c$ as $\varphi(\sigma') = \sigma_0\dots\sigma_{c-1}$, where $T_i \in \sigma_r$ if and only if $T_{i, r} \in \Sigma$. Define the extension $\varphi': \left(2^{\mathcal{T}'}\right)^\omega \rightarrow  \left(2^{\mathcal{T}}\right)^\omega$ as $\varphi'(\alpha') = \varphi(\alpha'(0))\varphi(\alpha'(1)) \cdots$ and observe that $\varphi'$ is bijective. Define $\mathcal{L'} = \{\alpha': \varphi'(\alpha') \in \mathcal{L}\}$. Clearly, $\alpha' \in \mathcal{L}'$ if and only if $\alpha \in \mathcal{L}$. In only remains to show that $\mathcal{L}'$ is $\omega$-regular.

We shall use a deterministic Muller automaton $\mathcal{A} = (Q, 2^{\mathcal{T}}, q_0, \delta, \mathcal{F})$ that accepts $\mathcal{L}$ to give a deterministic Muller automaton $\mathcal{A}' = (Q, 2^{\mathcal{T}'}, q_0', \delta', F')$ that accepts $\mathcal{L'}$ in order to do so. Intuitively, $\mathcal{A}'$, upon reading each letter $\sigma'$ of $\alpha'$ simulates what $\mathcal{A}$ would do upon reading the corresponding $\varphi(\sigma')$.
\begin{align*}
Q &= Q \times 2^Q \\
q_0 &= (q_0, \{\}) \\
\delta'((q, S_{prev}), \sigma') &= (\delta(q, \varphi(\sigma')), S_{new})
\end{align*}
where $S_{new}$ is the set of states traversed by $\mathcal{A}$ while reading $\varphi(\sigma')$ and going from $q$ to $\delta(q, \varphi(\sigma'))$ along a path of length $c$. We have that
$$
\text{Inf}_{\mathcal{A}}(\alpha) = \bigcup_{(q, S) \in \text{Inf}_{\mathcal{A}'}(\alpha')} S.
$$
Thus the acceptance condition
$$
F' \in \mathcal{F}' \Leftrightarrow \left(\bigcup_{(q, S)\in F' }S\right) \in \mathcal{F}.
$$

This establishes property (2) and completes the proof of Lemma \ref{rootelim}.
\qed
\end{proof}

Lemma \ref{rootelim} thus reduces $(\mu_1, M_1, \mathcal{T}_1, \mathcal{L}_1)$ to $(\mu_1, M_2, \mathcal{T}_2, \mathcal{L}_2)$, where $M_2$ is invertible, and whose only eigenvalue of unit modulus is simple and equal to $1$. We also have that the dimensions of sets in $\mathcal{T}_2$ are at most those in $\mathcal{T}$. We get rid of the eigenvalue $1$, leaving ourselves with an update matrix $M_3$ whose dimension is equal to the dynamical dimension of the original stochastic matrix $M$. We show that in doing so, the linear dimension of the resulting sets in $\mathcal{T}_3$ is less than that of their counterparts in $\mathcal{T}_2$.

\begin{lemma}
\label{mctolds}
Let $M \in \mathbb{Q}^{(k+\ell)\times (k+\ell)}$ have nonzero eigenvalues $1, \lambda_1, \dots, \lambda_j$, such that the eigenvalue $1$ is simple, has multiplicity $\ell$, and is the only eigenvalue of modulus at least $1$. Let $\mu \in \mathbb{Q}^{k+\ell}$, and let $\mathcal{T} = \{T_1, \dots, T_h\}$ be a collection of semialgebraic sets. We can compute $n_0 \in \mathbb{N}$, $s \in \mathbb{Q}^{k+\ell}$, a rank-$k$ matrix $Q \in \mathbb{Q}^{(k+\ell) \times k}$, an invertible and non-degenerate matrix $A \in \mathbb{Q}^{k \times k}$ with eigenvalues $\lambda_1, \dots, \lambda_k$, a vector $v \in \mathbb{Q}^k$, and a collection of semialgebraic sets $\mathcal{T'} = \{T_1', \dots, T_h'\}$ such that:
\begin{enumerate}
\item For all $n$,
$
M^n \mu = s + QA^n v
$.
\item The eigenvalues of $A$ are $\lambda_1, \dots, \lambda_j$. Moreover, for any $i$, $y$ is a generalised order-$i$ eigenvector of $A$ corresponding to $\lambda \ne 0$ if and only if $Qy$ is a generalised order-$i$ eigenvector of $M$ corresponding to $\lambda$.
\item Let $n \ge n_0$. For every $i$, $M^n \mu \in T_i$ if and only if $A^n v \in T'_i$. If $T_i$ has linear and intrinsic dimensions $d_1, d_2$ respectively, then the corresponding dimensions of $T_i'$ are at most $d_1-1, d_2$.
\end{enumerate}
\end{lemma}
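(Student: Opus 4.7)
The plan is to mirror Lemma~\ref{zeroelim}, now letting the eigenvalue $1$ play the role analogous to $0$, and to supplement the construction with a careful asymptotic-based choice of $T_i'$ that achieves the linear-dimension reduction. First I would invoke Theorem~\ref{decomposition} to decompose $\mathbb{C}^{k+\ell} = V_1 \oplus W$, where $V_1$ is the (simple) eigenspace of~$1$ of dimension $\ell$ and $W$ is the sum of the remaining nonzero eigenspaces of dimension~$k$. Rational bases for both can be obtained as in Lemma~\ref{zeroelim}: $V_1 = \ker(M - I)$ is cut out by a rational linear system, and $W$ is the kernel of a rational matrix whose rows are a basis of the left $1$-eigenspace. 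Collecting these bases into $P, Q$ and inverting $R = [P\ Q]$ into $\begin{pmatrix} P' \\ Q' \end{pmatrix}$ yields $Q'Q = I_k$. Decomposing $\mu = Pu + Qv$ with $u = P'\mu$ and $v = Q'\mu$, setting $s := Pu \in V_1 \cap \mathbb{Q}^{k+\ell}$ and $A := Q'MQ \in \mathbb{Q}^{k \times k}$, and using the fact that $M$ fixes every vector in $V_1$, we immediately obtain $M^n\mu = s + QA^n v$, establishing property~(1). Property~(2) then follows verbatim from the arguments of Lemma~\ref{zeroelim} applied to the restriction of $M$ to $W$; non-degeneracy is inherited from the preprocessing, since the exponent $c$ in the application of Lemma~\ref{rootelim} can be chosen to kill all root-of-unity ratios among the remaining eigenvalues, not merely to collapse unit-modulus ones to~$1$.

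The substantive part is property~(3). The decisive structural observation is that $W \subseteq \ker \mathbf{1}^\top$ by Lemma~\ref{ortho} (since $\mathbf{1}^\top$ spans the left $1$-eigenspace), while $\mathbf{1}^\top s = \mathbf{1}^\top \mu = 1$ because stochastic iterates lie on the probability simplex. Hence $s + W$ is contained in the affine hyperplane $H := \{y : \mathbf{1}^\top y = 1\}$, and the trajectory $M^n\mu$ is confined to $s+W$. Consequently $M^n\mu \in T_i$ iff $M^n\mu \in T_i \cap (s+W) \subseteq T_i \cap H$. I would then establish the dimension lemma that the affine span of $T_i \cap H$ has dimension at most $d_1 - 1$. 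If $T_i \subseteq H$, then since $0 \notin H$ the linear span of $T_i$ strictly exceeds its affine span, forcing the latter to have dimension $d_1 - 1$; otherwise $T_i$ has points outside $H$, and intersecting with the hyperplane strictly drops the affine dimension. Pulling back along the affine isomorphism $x \mapsto s + Qx$, the affine dimension of the candidate $T_i' \subseteq \mathbb{R}^k$ is at most $d_1 - 1$.

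To upgrade this affine-dimension bound into the required \emph{linear}-dimension bound, I would split into two cases on whether $s$ lies in the affine span $A_i^*$ of $T_i \cap (s+W)$. In the first case, $s \in A_i^*$ implies that $0$ lies in the affine span of $T_i' := \{x : s + Qx \in T_i\}$ under the affine isomorphism, so the linear span of $T_i'$ coincides with its affine span, yielding linear dimension at most $d_1 - 1$. In the second case, $s \notin A_i^*$ yields a strictly positive distance $\delta := \operatorname{dist}(s, A_i^*)$; since every eigenvalue of $A$ has modulus strictly less than $1$, I can effectively compute an $n_0^{(i)}$ such that $\|QA^n v\| < \delta$ for all $n \ge n_0^{(i)}$, which forces $M^n\mu$ to remain within distance $\delta$ of $s$ and hence outside $A_i^* \supseteq T_i \cap (s+W)$, so one legitimately sets $T_i' := \emptyset$. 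Taking $n_0 := \max_i n_0^{(i)}$ unifies the cases, and the intrinsic-dimension bound $\dim T_i' \le d_2$ is automatic in both cases since $T_i'$ is (affinely isomorphic to) a subset of $T_i$.

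The main technical obstacle I anticipate is the effectivity bookkeeping: extracting affine spans of semialgebraic sets symbolically, deciding membership of $s$ in these spans, estimating $\delta$, and exhibiting an explicit $n_0^{(i)}$ in terms of the spectral data of~$A$ while accounting for the polynomial factors that may arise from possibly nontrivial Jordan blocks. These are all routine real-algebraic and spectral manipulations, but preserving rationality of $s$, $Q$, $A$, and $v$ throughout, as in the proof of Lemma~\ref{zeroelim}, requires care.
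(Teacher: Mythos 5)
Your overall strategy coincides with the paper's: reuse the decomposition of Lemma~\ref{zeroelim} with the $1$-eigenspace playing the role of the $0$-eigenspace to get requirements (1) and (2), exploit the exponential decay of $\|A^n v\|$ to choose $n_0$, handle far-away targets by making $T_i'$ empty, pull the remaining targets back through the rational affine map $x \mapsto s + Qx$ (which preserves intrinsic dimension), and obtain the linear-dimension drop from a linear functional that vanishes on $W$ but not at $s$. The paper argues with the linear spans $U_i$ of the $T_i$ and an $\varepsilon$-ball restriction where you use affine spans $A_i^*$ and a case split; that difference is harmless.

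The genuine gap is in where you get the functional. You take it to be $\mathbf{1}^\top$, justified by ``$\mathbf{1}^\top$ spans the left $1$-eigenspace'' and ``$\mathbf{1}^\top s = \mathbf{1}^\top \mu = 1$ because stochastic iterates lie on the probability simplex''. But Lemma~\ref{mctolds} is stated for an arbitrary rational $M$ with the given spectral profile, and in the reduction chain it is applied to the system produced by Lemmas~\ref{zeroelim} and~\ref{rootelim}, which is in general \emph{not} stochastic and whose initial vector is not a distribution; so neither $\mathbf{1}^\top M = \mathbf{1}^\top$ nor $\mathbf{1}^\top \mu = 1$ is available, and your hyperplane $H = \{y : \mathbf{1}^\top y = 1\}$ need not contain the trajectory at all. (Even for genuinely stochastic $M$, $\mathbf{1}^\top$ merely lies in the left $1$-eigenspace; it spans it only when $\ell = 1$.) The repair is exactly what the paper does: from the computed rational bases one has $P'Q = O$, i.e.\ $W \subseteq \ker P'$ (this is also what Lemma~\ref{ortho} gives), and, provided $s \neq 0$, some row $p'$ of $P'$ satisfies $p's \neq 0$; this row is the functional you want, and appending it to a system of equations cutting out the span containing $T_i$ (which must contain $s$ whenever $T_i'$ is nonempty) forces $Qx$, for $x \in T_i'$, into a subspace of dimension at most $d_1 - 1$, whence $T_i'$ itself has linear dimension at most $d_1 - 1$ via $Q'$. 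With $\mathbf{1}^\top$ replaced by such a $p'$ your argument goes through (note that both your proof and the paper's implicitly use $s \neq 0$, which is guaranteed in the Markov-chain application); as written, however, the decisive step is anchored to stochasticity that the lemma's inputs do not have.
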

\begin{proof}
The proof of Lemma \ref{zeroelim} applies \emph{mutatis mutandis} to establish requirements (1) and (2). In this case, $P$ spans the eigenspace of the simple eigenvalue $1$, and we have $MP = P$ and $P'Q = O$. We choose $s, v$ from the unique representation of $\mu = s + Qv$, where $s$ is a linear combination of the columns of $P$.

Having established requirement (2), we deduce that the Euclidean norm $||A^n v||$ converges to $0$ exponentially quickly. We use this observation in computing $n_0, \mathcal{T}'$ as required. Let $U_1, \dots, U_h$ be the smallest subspaces containing $T_1, \dots, T_h$ respectively. We compute $\varepsilon \in \mathbb{Q}$ such that for all $i$, if $s \notin U_i$ then neither is any point $x$ in the Euclidean $\varepsilon$-neighbourhood of $s$. We compute $n_0$ such that for all $n \ge n_0$, $||A^n v|| = ||M^n \mu - s|| < \varepsilon$.

For all $i$, we define $T_i' = \{x \in \mathbb{R}^k : s + Qx \in T_i \land ||x|| < \varepsilon\}$. Observe that $T'_i$ is non-empty only if $s \in U_i$. It is also straightforward to adapt the reasoning from the proof of Lemma \ref{zeroelim}(2) to conclude that for $n \ge n_0$, $M^n \mu \in T_i$ if and only if $A^n v \in T_i'$ and that the intrinsic dimension of each $T'_i$ is at most that of the corresponding $T_i$. It remains to prove that if $T_i'$ is nonempty, then its linear dimension is less than that of $T_i$.

Consider $U_i$. Since $T'_i$ is non-empty, we know $s \in U_i$. By definition of a subspace, there exists a $(k+\ell - d)\times (k+\ell)$ matrix $H$ of rank $k+\ell -d$ such that $y \in U_i$ only if $Hy = \mathbf{0}_{k+\ell-d}$. Furthermore, since $s \in U_i$, we have that $Hs = \mathbf{0}_{k+\ell-d}$. We shall now identify a subspace that contains $T'_i$.

By definition, $x \in T'_i$ if and only if $s + Qx \in T_i$. Thus, $x \in T'_i$ only if $H \cdot Qx = \mathbf{0}_{k+\ell-d}$. Now, recall that the columns of $Q$ span $W$, the union of eigenspaces of the eigenvalues $\lambda_1, \dots, \lambda_j$, which is also the $k$-dimensional kernel of the rank-$\ell$ matrix $P'\in \mathbb{Q}^{\ell \times (k+\ell)}$. We know that $s$ is an eigenvector of $1$, and is hence not in $W$. Thus, we have at least one row $p'$ of $P'$ such that $p's \ne 0$. This row is thus necessarily linearly independent of the rows of $H$, for otherwise $Hs = \mathbf{0}_{k+\ell-d}$ would imply $p's = 0$. The fact that $P'Q = O$ by construction guarantees that $P' \cdot Qx = \mathbf{0}_\ell$.

The above argument allows us to append $P'$ to the rows of $H$, obtaining a matrix $G$ of rank at least $k+\ell -d+1$ by virtue of its rows $H, p'$ being linearly independent. We can assert a stronger claim: $x \in T'_i$ only if $G \cdot Qx = \mathbf{0}_{k+2\ell -d +1}$. By the rank-nullity theorem, it follows that $Qx$ must lie in a vector space of dimension $d' \le d-1$ that is contained in $W$, the span of the columns of $Q$.

Finally, we recall that $Q$ has rank $k$: thus the linear map $Q$ from $\mathbb{R}^k$ to $W$ is invertible via $Q'$. Thus, $Qx \in \text{span}(v_1, \dots, v_{d'}) \subseteq W$ (if and) only if \\ $x \in \text{span}(Q'v_1, \dots, Q'v_{d'})$.

This allows us to conclude that $x \in T$ only if $x$ is contained in a linear subspace of dimension at most $d-1$, and completes the proof of Lemma \ref{mctolds}.
\qed
\end{proof}

Finally, Lemma \ref{mctolds} completes the chain of reductions from $(\mu, M, \mathcal{T}, \mathcal{L})$ to $(\mu_2 = v, M_3 = A, \mathcal{T}_3, \mathcal{L}_3)$, where $A$ is invertible and has dimensions equal to the dynamical dimension of $M$. We have $\mathcal{L}_3 = \{\beta: \alpha(0)\cdots \alpha(n_0-1)\cdot \beta \in \mathcal{L}_2\}$, where $\alpha$ is the characteristic word of $M_2, \mu_1$ with respect to $\mathcal{T}_2$. We also have that the linear dimensions of sets in $\mathcal{T}_2$ are at less those of their correspondents in $\mathcal{T}$. This completes the proof of our decidability results.

\section{Proofs of the hardness results}
\label{hardness proofs}
In this section, we prove Theorems~\ref{ldstomc} and~\ref{skolemhard}, thus showing that our decidability results are tight, i.e.\ that further breakthroughs would entail the decidability of Skolem-$5$. Recall that a Markov chain is ergodic if its transition graph is irreducible (i.e.\ strongly connected) and aperiodic. Equivalently, for its stochastic matrix $M$ there exists an $n_0$ such that all entries of $M^{n_0}$ are strictly positive. In our hardness proofs, we shall construct $M$ with $n_0 = 1$.

\subsection{First hardness result}
We first show how to construct, given an arbitrary LDS $(v,A)$, an ergodic Markov chain $(\mu, M)$ that captures the dynamics of $(v,A)$.
%Insights from the proof of Lemma \ref{mctolds} give an alternate proof of the result of \cite{vahanwalaergodic} and allow us to slightly generalise it.

\begin{corollary}[Of Proof of Lemma \ref{mctolds}]
\label{ldstomcbase}
Let $s \in \mathbb{Q}^{k+1}$ be a distribution
%(i.e. $\mathbf{1}_{k+1}^\top s = 1$)
with strictly positive entries, $A \in \mathbb{Q}^{k \times k}$ and $v \in \mathbb{Q}^k$. We can compute an ergodic Markov chain $(\mu, M)$ and constants $\eta, \rho \in \mathbb{Q}$ such that $M^n \mu = s + \eta \rho^n \begin{bmatrix} I \\ -\mathbf{1}_{k}^\top\end{bmatrix}A^n v$ for all $n$.
\end{corollary}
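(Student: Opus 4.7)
The plan is to invert the construction of Lemma~\ref{mctolds}. I will take $\mu := s + \eta Cv$ where $C := \begin{bmatrix} I_k \\ -\mathbf{1}_k^\top \end{bmatrix}$, and $M := (1-\rho)\, s\mathbf{1}_{k+1}^\top + \rho\, \tilde A$, where $\tilde A \in \mathbb{Q}^{(k+1)\times(k+1)}$ fixes $s$ and ``lifts'' $A$ to the column space of $C$ in the sense that $\tilde A C = C A$. The parameters $\eta, \rho \in \mathbb{Q}_{>0}$ will then be chosen small enough to ensure ergodicity and validity of the distributions.

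The first step is to exhibit a rational $\tilde A$ with the required properties. Since $s$ has strictly positive entries, $\mathbf{1}_{k+1}^\top s = 1 \ne 0$; meanwhile every column of $C$ sums to zero and the columns are linearly independent (the top $k$ rows of $C$ form the identity). Hence $s, Ce_1, \ldots, Ce_k$ form a basis of $\mathbb{R}^{k+1}$, and writing $R := [s \mid C]$ I can set $\tilde A := R\, \operatorname{diag}(1, A)\, R^{-1}$. This has rational entries and satisfies $\mathbf{1}_{k+1}^\top \tilde A = \mathbf{1}_{k+1}^\top$ (verified columnwise on the basis: $\mathbf{1}_{k+1}^\top \tilde A s = 1$ and $\mathbf{1}_{k+1}^\top \tilde A C e_i = \mathbf{1}_{k+1}^\top C A e_i = 0$).

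Next I address column-stochasticity, non-negativity, and ergodicity. The matrix $M$ is column-stochastic for every $\rho \in (0,1)$, since $\mathbf{1}_{k+1}^\top M = (1-\rho)\mathbf{1}_{k+1}^\top + \rho\mathbf{1}_{k+1}^\top = \mathbf{1}_{k+1}^\top$. Because every entry of $s\mathbf{1}_{k+1}^\top$ is at least $\min_i s_i > 0$ and the entries of $\tilde A$ are bounded in magnitude by an effectively computable rational, a sufficiently small rational $\rho > 0$ makes $M$ entrywise strictly positive; the resulting Markov chain is then ergodic with $n_0 = 1$. A similar thresholding argument yields a rational $\eta > 0$ such that $\mu$ has all positive entries, and $\mathbf{1}_{k+1}^\top \mu = 1$ is automatic from $\mathbf{1}_{k+1}^\top C v = 0$, so $\mu$ is a valid distribution.

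Finally, $M^n \mu = s + \eta \rho^n C A^n v$ follows by a straightforward induction using the two identities $M s = s$ (immediate from $\tilde A s = s$ and $\mathbf{1}_{k+1}^\top s = 1$) and $M C y = \rho C A y$ for arbitrary $y$ (immediate from $\mathbf{1}_{k+1}^\top C y = 0$ and $\tilde A C = C A$), applied to $y = A^{n-1} v$. The main conceptual hurdle is recognising the correct ansatz: the hyperplane $\operatorname{Im}(C) = \{y \in \mathbb{R}^{k+1} : \mathbf{1}_{k+1}^\top y = 0\}$ is the natural ambient space for ``deviations from the stationary distribution'', and mixing $\tilde A$ with the rank-one stochastic matrix $s\mathbf{1}_{k+1}^\top$ simultaneously absorbs the (possibly wild) spectrum of $A$ into a bona fide stochastic matrix and secures ergodicity; the remaining verifications are a routine check of rational computability.
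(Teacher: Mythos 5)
Your construction is correct and is essentially the paper's own: expanding $\tilde A = R\,\operatorname{diag}(1,A)\,R^{-1} = s\mathbf{1}_{k+1}^\top + CAC'$ (with $R^{-1} = \bigl[\begin{smallmatrix}\mathbf{1}_{k+1}^\top \\ C'\end{smallmatrix}\bigr]$) shows that your convex mixture $(1-\rho)s\mathbf{1}_{k+1}^\top + \rho\tilde A$ equals the paper's matrix $M = s\mathbf{1}_{k+1}^\top + \rho\, Q A Q'$, and your choices of $\mu$, $\eta$, $\rho$ and the verification of positivity, stochasticity, ergodicity and $M^n\mu = s + \eta\rho^n Q A^n v$ match the paper's argument. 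The only difference is presentational (mixing with the rank-one stochastic matrix versus conjugating $\operatorname{diag}(1,\rho A)$ directly), so there is nothing to fix.
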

\begin{proof}
Choose $Q$ from Lemma \ref{mctolds} to be $\begin{bmatrix} I \\ -\mathbf{1}_{k}^\top\end{bmatrix}$. Observe that $s$ is linearly independent of the columns of $Q$: indeed, we have $\mathbf{1}_{k+1}^\top Q = \mathbf{0}_{k+1}^\top$, but $\mathbf{1}_{k+1}^\top s = 1$. Thus the matrix $R = \begin{bmatrix}s & Q\end{bmatrix}$ is invertible. Moreover, we can easily check that its inverse ${R}^{-1}$ is of the form $\begin{bmatrix}\mathbf{1}_{k+1}^\top \\ Q'\end{bmatrix}$.

Now we choose $\eta$ such that the magnitude of the largest entry of $\eta Qv$ is less than the smallest entry of $s$. We take $\mu = s + \eta Qv$, and observe that all entries of $\mu$ are positive, and moreover that $\mathbf{1}_{k+1}^\top \mu = 1$, making $\mu$ a distribution. We choose $\rho$ to ensure that the magnitude of the largest entry of $\rho QAQ'$ is smaller than the smallest entry of $s$.

Now, we take
\begin{align*}
M &= RDR^{-1} =
\begin{bmatrix}
s & Q
\end{bmatrix}
\begin{bmatrix}
1 & \mathbf{0}_k^\top \\
\mathbf{0}_k & \rho A
\end{bmatrix}
\begin{bmatrix}
\mathbf{1}_{k+1}^\top \\
Q'
\end{bmatrix}
=
\begin{bmatrix}
s & Q
\end{bmatrix}
\begin{bmatrix}
\mathbf{1}_{k+1}^\top \\
\rho AQ'
\end{bmatrix}
= s\cdot \mathbf{1}_{k+1}^\top + \rho QAQ'.
\end{align*}
The choice of $\rho$ ensures that each entry of $M$ is positive, and it is moreover easy to check that $\mathbf{1}_{k+1}^\top M = \mathbf{1}_{k+1}^\top$. This makes $M$ a stochastic matrix.
Since all entries of $M$ are positive, $(\mu, M)$ is ergodic.
It is then straightforward to verify that $M^n \mu = s + \eta\rho^n QA^n v$.
\qed
\end{proof}

We mention that Cor.~\ref{ldstomcbase} generalises the result of \cite{vahanwalaergodic}.
We now prove Theorem \ref{ldstomc}. Given an instance $(v, A, \mathcal{T}, \mathcal{L})$ of the model-checking problem for linear dynamical systems with $\mathcal{T} = \{T_1, \dots, T_h\}$ being a collection of homogeneous sets, we apply Corollary \ref{ldstomcbase} to obtain $\mu, M$, and consider $Q'$ from its proof. We note that by homogeneity, for each $i$, $A^n v \in T_i$ if and only if $\eta \rho^n A^n v \in T_i$. We construct $\mathcal{T}' = \{T_1', \dots, T_h'\}$ as $T_i' = \{y: \mathbf{1}_{k+1}^\top (y-s) = 0 \land Q'(y-s) \in T_i\}$. The first conjunct ensures that $y-s$ is in the space spanned by the columns of $Q$. The set $T_i'$ is $s$-homogeneous by construction, and it is straightforward to check that $M^n \mu \in T_i'$ if and only if $\eta \rho^n A^n v \in T_i$ if and only if $A^n v \in T_i$. The equivalent instance $(\mu, M, \mathcal{T}', \mathcal{L}')$ is obtained by taking $\mathcal{L}'$ to be the $\omega$-regular language obtained from $\mathcal{L}$ by simply renaming letters in $2^{\mathcal{T}}$ to their counterparts in $2^{\mathcal{T}'}$.

\subsection{Second hardness result}
In this section, we take a hard instance $(A, v, h)$ of Skolem-$5$ and construct an equivalent instance $(\mu, M, T)$ of the reachability problem where $(\mu, M)$ is an ergodic Markov chain and $T$ is a semialgebraic target with intrinsic dimension at most~$2$.
We will need the following.

\begin{corollary}[Of Proof of Lemma \ref{mctolds}]
	\label{cor::making-ergodic-mc-from-lds}
	Let $s \in \mathbb{Q}^{k+1}$ be a distribution with strictly positive entries, $B \in \mathbb{Q}^{k \times k}$, and $v \in \mathbb{Q}^k$.
	Write $Q = \begin{bmatrix} I \\ -\mathbf{1}_{k}^\top\end{bmatrix}$, and suppose that every entry of $Q^{-1}BQ$ is less than every entry of $s$ in magnitude.
	Then we can compute an ergodic Markov chain $(\mu, M)$ and $\eta \in \rat_{>0}$ such that $M^n\mu = s + \eta QB^nv$ for all $n$.
\end{corollary}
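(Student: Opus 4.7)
The plan is to mimic the proof of Corollary~\ref{ldstomcbase} essentially verbatim, but with the scaling parameter $\rho$ absorbed into the hypothesis on $B$. In the earlier proof, the quantitative use of $\rho$ was to ensure that the entries of $\rho Q A Q'$ are smaller in magnitude than the entries of $s$, which guaranteed that $M = s \mathbf{1}_{k+1}^\top + \rho Q A Q'$ is entrywise positive. Here, that smallness assumption is placed directly on the entries of $QBQ'$ (I read the expression ``$Q^{-1}BQ$'' in the statement as $QBQ'$, where $Q' \in \rat^{k \times (k+1)}$ is the left inverse of $Q$ coming from the block structure of $R^{-1}$, so that $Q'Q = I$ and $Q's = \mathbf{0}_k$).

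Concretely, first I would set $R = \begin{bmatrix} s & Q \end{bmatrix}$. As in the proof of Corollary~\ref{ldstomcbase}, $R$ is invertible because $\mathbf{1}_{k+1}^\top s = 1$ and $\mathbf{1}_{k+1}^\top Q = \mathbf{0}_k^\top$, so $s$ is linearly independent of the columns of $Q$. Moreover $R^{-1}$ has the form $\begin{bmatrix} \mathbf{1}_{k+1}^\top \\ Q' \end{bmatrix}$ for some $Q' \in \rat^{k \times (k+1)}$ with $Q'Q = I$ and $Q's = \mathbf{0}_k$. I would then define the block-diagonal $D = \begin{bmatrix} 1 & \mathbf{0}_k^\top \\ \mathbf{0}_k & B \end{bmatrix}$ and set $M = RDR^{-1}$, which simplifies to $M = s\mathbf{1}_{k+1}^\top + QBQ'$.

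Second, I would check that $M$ is a stochastic matrix. The column-sum condition $\mathbf{1}_{k+1}^\top M = \mathbf{1}_{k+1}^\top$ follows from $\mathbf{1}_{k+1}^\top s = 1$ and $\mathbf{1}_{k+1}^\top Q = \mathbf{0}_k^\top$; entrywise positivity follows directly from the hypothesis that each entry of $QBQ'$ is smaller in magnitude than every entry of $s$. Since all entries of $M$ are strictly positive, $(\mu, M)$ is automatically ergodic (with $n_0 = 1$). Next I would pick $\eta \in \rat_{>0}$ small enough that $\mu = s + \eta Q v$ is entrywise positive; since $s$ has strictly positive entries, any sufficiently small rational $\eta$ works. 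Because $\mathbf{1}_{k+1}^\top Q = \mathbf{0}_k^\top$, we have $\mathbf{1}_{k+1}^\top \mu = 1$, so $\mu$ is indeed a distribution.

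Finally, using $Q's = \mathbf{0}_k$ and $Q'Q = I$ gives $R^{-1}\mu = \begin{bmatrix} 1 \\ \eta v \end{bmatrix}$, and the block-diagonal structure yields $D^n = \begin{bmatrix} 1 & \mathbf{0}_k^\top \\ \mathbf{0}_k & B^n \end{bmatrix}$; substituting, $M^n \mu = R D^n R^{-1} \mu = s + \eta Q B^n v$, as required. Because this is a direct specialisation of the construction already used in Corollary~\ref{ldstomcbase}, the only genuinely new piece is interpreting the hypothesis so that it plays the role previously played by the choice of $\rho$; once this is done, there is no substantive obstacle, and the verification is purely routine block-matrix arithmetic.
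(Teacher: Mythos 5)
Your proposal is correct and follows essentially the same route as the paper, which proves this corollary by rerunning the construction of Corollary~\ref{ldstomcbase} with $\rho A$ replaced by $B$, whose entries are assumed small enough to make $M = s\mathbf{1}_{k+1}^\top + QBQ'$ positive. Your reading of the hypothesis ``$Q^{-1}BQ$'' as the $(k+1)\times(k+1)$ matrix $QBQ'$ (with $Q'$ the left inverse coming from $R^{-1}$) is the intended one, and the rest of your verification matches the paper's argument.
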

\begin{proof}
	Same as the proof of Cor.~\ref{ldstomcbase}, with the difference that $\rho A$ is replaced with~$B$, whose entries are already assumed to be sufficiently small.
	\qed
\end{proof}

Let $(A, v, h)$ be a hard instance of Skolem-$5$. We have to decide whether there exists $n$ such that $u_n = 0$, where $u_n = h^\top A^n v$ is a rational LRS.
As discussed in~\cite[Sec.~2.3]{karimovthesis}, we can assume the eigenvalues of $A$ are of the form $\lambda, \overline{\lambda}, \gamma, \overline{\gamma}, \rho$ and satisfy $|\lambda| = |\gamma| > |\rho| > 0$.
By considering the sequences $\seq{h^\top \bigl(A^2\bigr)^nv}$ and $\seq{h^\top \bigl(A^2\bigr)^n(Av)}$ separately if necessary, we can further assume $\rho > 0$.
Since the eigenvalues are the roots of characteristic polynomial of $A$, we have that $|\lambda|^4 \rho = \lambda \overline{\lambda} \gamma \overline{\gamma} \rho$ is rational.
Furthermore, $\rho$ must also be rational.
To see this, let $\sigma$ be a Galois automorphism of the splitting field $\mathbb{Q}(\lambda, \bar\lambda, \gamma, \bar\gamma, \rho)$ of the characteristic polynomial of $A$.
We have that $\sigma$ permutes $\lambda, \bar\lambda, \gamma, \bar\gamma, \rho$.
Moreover, $\lambda\bar\lambda = \gamma\bar\gamma$, and hence $\sigma(\lambda)\sigma(\bar\lambda) = \sigma(\gamma)\sigma(\bar\gamma)$. If $\sigma$ were to permute any of the other eigenvalues to $\rho$, this equality would be violated. Thus, $\rho$ is fixed by every automorphism and hence is rational.
We conclude that $|\lambda|^4$ must also be rational.

Let $s$ be a distribution with strictly positive entries, and
\[
\kappa = \frac{1}{|\rho|} \cdot \frac{|\rho|^{4c}}{|\lambda|^{4c}} \in \rat
\] for $c$ sufficiently large so the condition of Cor.~\ref{cor::making-ergodic-mc-from-lds} is met with $B \coloneqq \kappa A$.
Then the eigenvalues $\kappa \rho, \kappa \lambda$ of $B$ satisfy $(\kappa |\rho|)^{4c-1} = (\kappa |\lambda|)^{4c}$.

We will next construct an instance $(v, B, \widetilde{T})$ of the model-checking problem such that $\widetilde{T}$ is a semialgebraic set of intrinsic dimension at most $2$ and
\[
h^\top A^nv = 0 \Leftrightarrow B^n v \in \widetilde{T}
\]
for all $n$.
Write $r = |\lambda|$, $H = \{x \st h^\top x = 0\}$, and $B = PJP^{-1}$, where $J$ is in real Jordan form.
Without loss of generality, we can assume
\[
J = \operatorname{diag}(\kappa r  \Lambda, \kappa r  \Gamma, \kappa\rho)
\] where $\Lambda, \Gamma$ are $2 \times 2$ rotation matrices.\footnote{This is easily seen: let $v_\lambda, \bar v_\lambda, v_\gamma, \bar v_\gamma, v_\rho$ be eigenvectors of $\kappa A$, and take the columns of $P$ to be $(v_\lambda + \bar v_\lambda), i(v_\lambda - \bar v_\lambda), (v_\gamma + \bar v_\gamma), i(v_\gamma - \bar v_\gamma), v_\rho$.}
Write $Pv$ as $(v_1,\ldots,v_5)$, $d_1 = \sqrt{v_1^2+v_2^2}$, and $d_2 = \sqrt{v_3^2+v_4^2}$.
We have
\begin{align*}
	\Vert (\kappa r \Lambda)^n (v_1,v_2) \Vert &= d_1 \cdot (\kappa  r)^{n} = d_1 \cdot ((\kappa  r)^{n/(4c-1)})^{4c-1}\\
	\Vert (\kappa r \Gamma)^n (v_3,v_4) \Vert &= d_2 \cdot (\kappa  r)^{n} = d_2  \cdot((\kappa  r)^{n/(4c-1)})^{4c-1}\\
	v_5 \cdot (\kappa \rho)^n  &= (\kappa  r)^{4cn/(4c-1)}v_5 = v_5((\kappa  r)^{n/(4c-1)})^{4c}
\end{align*}
It follows that $\seq{J^nPv}$ is contained in the set $S$ of all $(x_1,\ldots,x_5) \in \rel^d$ satisfying the following equations.
\begin{align*}
	(x_1^2+x_2^2)d_2^2 &= (x_3^2+x_4^2)d_1^2\\
	((x_1^2+x_2^2)v_5^2)^{4c} &= x_5^{4c-1} \cdot v_5 d_1^{8c}.
\end{align*}
On the other hand, let $x(t) = \bigl(\frac{1-t^2}{1+t^2},\frac{2t}{1+t^2}\bigr)$, and recall that the unit circle can be parametrised as $\{(-1,0)\} \cup \{x(t)\st t\in \rel\}$.
We have that $S = S_1 \cup \cdots \cup S_4$, where
\begin{align*}
	S_1 &= \{(d_1 s^{4c-1} \cdot x(t_1), \, d_2  s^{4c-1} \cdot x(t_2),\, v_5 s^{4c}) \st s \ge 0, t_1,t_2 \in \rel \}\\
	S_2 &= \{( d_1 s^{4c-1} \cdot (-1, 0), \,d_2  s^{4c-1} \cdot x(t_2),\, v_5 s^{4c}) \st s \ge 0, t_2 \in \rel \}\\
	S_3 &= \{(d_1 s^{4c-1} \cdot x(t_1), \,d_2  s^{4c-1} \cdot (-1, 0),\, v_5 s^{4c}) \st s \ge 0, t_1 \in \rel \}\\
	S_4 &= \{( d_1 s^{4c-1} \cdot (-1, 0), \,d_2  s^{4c-1} \cdot (-1, 0),\, v_5 s^{4c}) \st s \ge 0\}.
\end{align*}
Each of $S_1,\ldots,S_4$ is parametrised by a Zariski-continuous function.
The domains $\rel_{>0} \times \rel^2, \rel_{>0} \times \rel, \rel_{>0}$ are all irreducible, and have dimensions $3,2,2,1$, respectively.
Hence each $S_i$ is irreducible, and $\dim(S_1) \le 3$, $\dim(S_2),\dim(S_3) \le 2$, and $\dim (S_4) \le 1$.
It can also be shown that $P\cdot H$, which is also irreducible, is not contained in any $S_i$ and vice versa.
Hence $\dim(P\cdot H \cap S_i) \le 2$ and therefore $\dim(P\cdot H \cap S) \le 2$.
We can therefore define $\widetilde{T} = H \cap P^{-1} S$.

Applying Cor.~\ref{cor::making-ergodic-mc-from-lds}, we compute an ergodic Markov chain $(\mu, M)$ and $\eta \in \rat_{>0}$ such that $M^n \mu = s+ \eta Q B^n v$.
Define $T = s+\eta Q \widetilde{T}$.
Then $\dim(T) \le 2$, and
\[
M^n\mu \in T \Leftrightarrow B^nv \in \widetilde{T} \Leftrightarrow A^nv \in H.
\]
This completes the proof of Theorem~\ref{skolemhard}.

\section{Discussion}
\label{conclusion}
We conclude by offering perspective on our decidability results and techniques. We observe that our reduction from Markov chains to ordinary linear dynamical systems becomes more powerful as the difference between the order (number of states) and the dynamical dimension of the Markov chain increases. Recall that this difference is attributed to the multiplicity of the eigenvalues of the underlying stochastic transition matrix $M$ that have modulus $0$ or $1$.

Having the eigenvalue $0$ with multiplicity $\ell$ has a ``dependency'' effect: all distributions $M^\ell \mu, M^{\ell+1}\mu, \dots$ will satisfy $\ell$ homogeneous linear constraints of the form $h_1 x_1 + \dots + h_k x_k = 0$, where $x_i$ is the probability of being in state $i$.

Eigenvalues of modulus $1$ other than $1$ itself are necessarily roots of unity, and hence describe the ``periodic'' behaviour of the system. They occur if every cycle in a bottom strongly connected component $G$ of the graph has length divisible by some $c > 1$: $G$ can then be partitioned into $G_0, \dots, G_{c-1}$ such that for each $i$, all transitions starting in $G_i$ end in $G_{i+1}$. Thus, the probability mass that lands in $G$ is also cycled between these partitions. By taking $M^c$ to be the transition matrix, we turn each partition into a separate bottom strongly connected component.

The eigenvalue $1$ accounts for the ``stationary'' behaviour of the system, and occurs as many times as there are bottom strongly connected components in the graph of the transition system. This is intuitively because every bottom strongly connected component $G$ corresponds to a unique stationary distribution $\mu_G$ which assigns nonzero probability to the states in $G$ and zero probability to all other states. A given initial distribution determines how the probability mass will eventually be distributed between the bottom strongly connected components. Furthermore, if these are aperiodic, then within each component $G$, the mass that lands in $G$ will inevitably distribute itself in proportion to $\mu_G$

Our reduction extracts the dynamics of the stochastic system by making dependencies implicit, partitioning periodicity into phases, and subtracting the stationary behaviour. The sequence of vectors in the resulting linear dynamical system converges to the origin exponentially quickly. In practice:
\begin{enumerate}
\item Markov chains might have strong dependencies, be highly periodic, and have several bottom strongly connected components. This could make the dynamical dimension for the model-checking problem small.
\item The sets in $\mathcal{T}$ might not have the limiting distributions on their boundaries, making the resulting sets in $\mathcal{T}'$ empty, and the resulting model-checking problem trivially decidable.
\end{enumerate}

The optimistic practitioner would consider common systems and believe that model checking Markov chains as distribution transformers ought to be easier than arbitrary linear dynamical systems; while a skeptical theoretician would recall ergodic Markov chains and expect that the problems are essentially equivalent. By efficiently distilling the dynamics of Markov chains from their dependencies, periodic nature, and limiting behaviour and quantifying their respective contributions, we formally reconcile these opposing intuitions.

\bibliographystyle{plain}
\bibliography{main}
\end{document}